\newtheorem{theorem}{Theorem}[section]
\newtheorem{lemma}[theorem]{Lemma}
\newtheorem{proposition}[theorem]{Proposition}
\newtheorem{corollary}[theorem]{Corollary}
\theoremstyle{definition}
\newtheorem{definition}[theorem]{Definition}
\theoremstyle{remark}
\begin{document}

\date{}
\title{ \bf\large{Niche differentiation in the light spectrum promotes coexistence of phytoplankton species: a spatial modelling approach}\footnote{Partially supported by a US-NSF grant, an NSERC discovery grant RGPIN-2020-03911 and an NSERC discovery accelerator supplement award RGPAS-2020-00090.}}
\author{Christopher M. Heggerud\textsuperscript{1},\ \ King-Yeung Lam\textsuperscript{2},\ \ Hao Wang\textsuperscript{3}
 \\
{\small \textsuperscript{1} Department of Mathematical and Statistical Sciences, University of Alberta, Edmonton, AB, Canada. \hfill{\ }}\\
\ \ {\small Email: cheggeru@ualberta.ca\hfill {\ }}\\
{\small \textsuperscript{2} Department of Mathematics, Ohio State University, Columbus, OH, United States. \hfill{\ }}\\
\ \ {\small Email: lam.184@math.ohio-state.edu\hfill {\ }}\\
{\small \textsuperscript{3} Department of Mathematical and Statistical Sciences, University of Alberta, Edmonton, AB, Canada. \hfill{\ }}\\
\ \ {\small Email: hao8@ualberta.ca\hfill {\ }}}

\maketitle

%
%
%
%
\begin{abstract}
The paradox of the plankton highlights the apparent contradiction between Gause's law of competitive exclusion and the observed diversity of phytoplankton.  It is well known that phytoplankton dynamics depend heavily on two main resources: light and nutrients. Here we treat light as a continuum of resources rather than a single resource by considering the visible light spectrum. We propose a spatially explicit reaction-diffusion-advection model to explore under what circumstance coexistence is possible from mathematical and biological perspectives. Furthermore, we provide biological context as to when coexistence is expected based on the degree of niche differentiation within the light spectrum and overall turbidity of the water. 
\end{abstract}

\section{Introduction}

Phytoplankton are microscopic photosynthetic aquatic organisms that are the main primary producers of many aquatic ecosystems, and play a pivotal role at the base of the food chain. However, the overabundance of phytoplankton species, or algal blooms as it is often referred to, regularly leads to adverse effects both environmentally and economically~\cite{Huisman2018,Reynolds2006,Watson2015}. For these reasons the study of phytoplankton dynamics is important to enhance the positive effects of phytoplankton while limiting any adverse outcomes. Phytoplankton dynamics depend on inorganic materials, dissolved nutrients and light, creating energy for the entire aquatic ecosystem via photosynthesis~\cite{Reynolds2006}. As the world becomes more industrialized anthropogenic sources of nutrients drastically increase, more often than not, eutrophication ensues. Eutrophication is defined as the excess amount of nutrients in a system required for life. Thus, in eutrophic conditions light becomes the limiting resource for phytoplankton productivity~\cite{Paerl2013,Watson2015}. 

Resource limitation, be it light or nutrient limitation, leads to competition amongst species. By the competitive exclusion principle, any two species competing for a single resource can not stably coexist. However, several cases exist in nature that seemingly contradict the competitive exclusion principle such as Darwin's finches, North American Warblers, Anoles, as well as phytoplankton. However, these contradictions are easily explained through niche differentiation.  The paradox of the plankton (a.k.a Hutchinson paradox) stems from ostensible contradiction between the diversity of phytoplankton typically observed in a water body and the competitive exclusion principle, since phytoplankton superficially compete for the same resources~\cite{Hutchinson1961}.
Several modelling attempts have been made to shed light on this paradox by considering spatial heterogeneity throughout the water column~\cite{Jiang2019,Jiang2021,HsuLou2010}. For example, competitive advantage is gained by a species who has better overall access to light, whether it be realized through buoyancy regulation or increased turbulent diffusion. 

Classically, light has been treated as a single resource and competitive exclusion is regularly predicted by mathematical models~\cite{Heggerud2020,Huisman1994,Wang2007,Jiang2019,Jiang2021,HsuLou2010}. However, further investigation shows that phytoplankton species can absorb and utilize wavelengths with varying efficiencies, implying non-uniform absorption spectra~\cite{Burson2018,Luimstra2020,Holtrop2021,Stomp2007a}. A species' absorption spectrum measures the amount of light absorbed, of a specific wavelength, by the species. Figure \ref{fig:absspectra} gives examples of absorption spectra for four different species of phytoplankton. These differences between the absorption spectra imply niche differentiation among species and can, in part, help to explain Hutchinson's paradox.

Light limitation in aquatic systems occurs through several different mechanisms. For instance, incident light can be variable due to atmospheric attenuation, Rayleigh scattering and the solar incidence angle. All of these factors contribute to the amount of light that enters the water column. Moreover, light is attenuated by molecules and organisms as it penetrates through the vertical water column. Typically this attenuation is modelled using Lambert-Beer's law which assumes an exponential form of light absorbance by water molecules and seston (suspended organisms, minerals, compounds, gilvin, tripton and etc.). However, the amount of light attenuated is not strictly uniform with respect to wavelengths. For example, pure water absorbs green and red wavelengths more than blue, giving water its typical bluish tone whereas waters rich in gilvin, that absorb blue light, typically appear yellow. Additionally, as mentioned, phytoplankton species' absorption spectra are non-uniform across the light spectrum thus contributing to the variable light attenuation. Because absorption depends on wavelength, the available light profile can change drastically throughout the depth of the water column, giving rise to water colour and another mechanism for species persistence.
For these reasons, modelling of phytoplankton dynamics should explicitly consider light and its availability throughout the water column.

Several attempts have been made to study phytoplankton competition and dynamics. Single species models have been well established and give good understanding of the governing dynamics of phytoplankton in general~\cite{Heggerud2020,HsuLou2010,Du2011,Shigesada1981}. These studies include various modelling approaches including stoichiometric modelling~\cite{Heggerud2020}, non-local reaction-diffusion equations~\cite{HsuLou2010,Du2011} and complex limnological interactions~\cite{Zhang2021}. Non-local reaction diffusion equations are beneficial to the study of phytoplankton population because they are capable of capturing light availability after attenuation throughout the water column,  modelling diffusion and buoyancy/sinking of phytoplankton, and there exists a myriad of mathematical tools and theories to aid in their analysis. One such mathematical theory that we utilize in this paper is the monotone dynamical systems theory popularized by Smith~\cite{Smith}. The theory of monotone dynamical systems is a powerful tool to study the global dynamics of a complex competition system as utilized in~\cite{Jiang2019,Jiang2021,HsuLou2010}.

In this paper we extend spatially explicit mathematical models for phytoplankton dynamics to consider competition amongst phytoplankton species with niche differentiation in the absorption spectrum~\cite{Jiang2019,Jiang2021,Stomp2007}. Furthermore, the underwater light spectrum, and its attenuation, modelled by the Lambert-Beer law,  explicitly depends on the wavelengths of light. In Section \ref{sec:model}, we propose a reaction-diffusion-advection phytoplankton competition model that non-locally depends on phytoplankton abundance and light attenuation. In Section \ref{sec:mathresults}, we provide several preliminary results regarding the persistence of a single species via the associated linearized eigenvalue problem. In Section \ref{sec:Mathniche}, we introduce an index to serve as a proxy for the level of niche differentiation amongst two species and provide coexistence results based on this index. In the absence of niche differentiation we establish the competitive exclusion results based on advantages gained through buoyancy or diffusion. In Section \ref{sec:NumCoexistmechs}, we numerically explore how niche differentiation via i) specialist versus specialist competition, and ii) specialist versus generalist competition, can overcome competitive advantages that would otherwise result in competitive exclusion. We then consider the case when more than two species compete and show that upon sufficient niche differentiation any number of phytoplankton species may coexist in Section \ref{sec:Nspecies}. Finally, we offer a realistic competition scenario where the absorption spectra of two competing species are given in Figure \ref{fig:absspectra} and background attenuation is modelled based on water conditions ranging from clear to highly turbid. Our work offers a possible explanation of Hutchinson's paradox. That is, through sufficient niche differentiation in the light spectrum, many phytoplankton species can coexist.

\begin{figure}
    \centering
    \includegraphics[width=0.65\paperwidth]{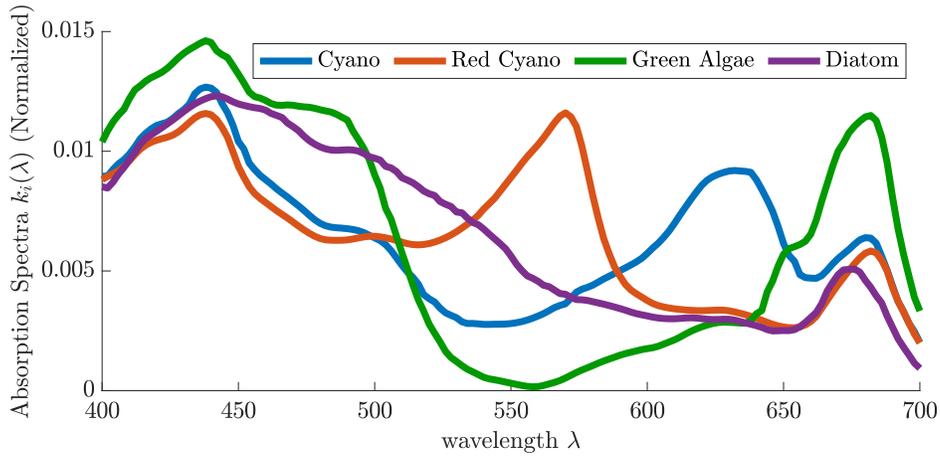}
    \caption{Normalized absorption spectra for four phytoplankton species: green cyanobacteria (\textit{Synechocystis} strain), red cyanobacteria (\textit{Synechococcus} strain), green algae (\textit{Chlorella} strain) and a diatom (\textit{Nitzschia} strain)~\cite{Luimstra2020,Burson2018,Stomp2007}. The differences of absorption spectra among species imply niche differentiation throughout the spectrum.}
    \label{fig:absspectra}
\end{figure}

\section{The model}\label{sec:model}
In this section we extend a two species non-local reaction-diffusion-advection model proposed in several papers~\cite{Jiang2019,Jiang2021,HsuLou2010,Du2010} to consider niche differentiation via absorption spectra separation. The PDE system assumes sufficient nutrient conditions so that light is the only factor limiting phytoplankton growth. However, the species are capable of utilizing incident wavelengths at varying efficiency as highlighted in Figure \ref{fig:absspectra}~\cite{Stomp2007,Burson2018,Luimstra2020,Holtrop2021}. Because of the attenuation of light through the vertical water column, the diffusivity of the phytoplankton and the potential for buoyancy regulation (advection) the system is spatially explicit. That is, let $x$ denote the vertical depth within the water column then $u_1(x,t)$ and $u_2(x,t)$ are the population density of competing phytoplankton species 1 and 2 at depth $x$ and time $t$. 
The following model generalizes the one of Stomp et al. \cite{Stomp2007} 
 to the spatial context:
\begin{equation}\label{eq:full}
\begin{cases}
\partial_t u_1 = D_1 \partial^2_xu_1 - \alpha_1 \partial_x u_1+ [g_1(\gamma_1(x,t)) - d_1(x)]u_1& \text{ for } 0 < x < L,\, t>0,\\
\partial_t u_2= D_2 \partial^2_xu_2- \alpha_2 \partial_x u_2+ [g_2(\gamma_2(x,t)) - d_2(x)]u_2& \text{ for } 0 < x < L,\, t>0,\\
D_1\partial_x u_1(x,t) - \alpha_1 u_1(x,t) = D_2\partial_x u_2(x,t) - \alpha_2 u_2(x,t)=0 & \text{ for }x = 0, L,\, t>0,\\
u_1(x,0) = u_{1,0}(x),\, u_2(x,0) =u_{2,0}(x)&\text{ for }0 < x < L.
\end{cases}
\end{equation}
In this paper,
the turbulent diffusion coefficients $D_1,D_2>0$ and sinking/buoyancy coefficients $\alpha_1,\alpha_2 \in \mathbb{R}$ are assumed to be constants; the functions $d_1(x),\,d_2(x)\in C([0,L] )$ are the death rate of the species at depth $x$; the function 
$\gamma_1(x,t)$ is the number of absorbed photons available for photosynthesis by species $1$ and is given by
\begin{equation}\label{eq:gamma1}
\gamma_1(x,t) = \int_{400}^{700}a_1(\lambda) k_1(\lambda) I(\lambda,x)\,d\lambda,
\end{equation}
where $k_1(\lambda)$ and $k_2(\lambda)$ are the absorption spectra of species $1$ and $2$, respectively. The absorption spectrum is the proportion of incident photons of a given wavelength absorbed by the cell. The respective quantity $\gamma_2(x,t)$ for species $2$ is similarly defined. 
For each given wavelength $\lambda$, the quantities
$a_1(\lambda)$ (resp. $a_2(\lambda)$) converts the absorption spectrum of species $1$, (and $u_2$) into the action spectrum, or the proportion of absorbed photons used for photosynthesis, of phytoplankton species $1$ (resp. $2$). In many cases photons are absorbed and utilized with similar efficiency, thus we take $a_1(\lambda) = a_2(\lambda)=1$.  Sunlight enters the water column with an incident light spectrum $I_{\rm in}(\lambda)$ and $I(\lambda,x,t)$ is the light intensity of wavelength $\lambda$ at depth $x$ which, according to the Lambert-Beer's law, given by
\begin{equation}
  I(\lambda,x,t) = I_{\rm in}(\lambda) \exp\left[- K_{BG}(\lambda)x - k_1(\lambda)\int_0^x u_1(y,t)-k_2(\lambda)\int_0^x u_2(y,t) \right], 
\end{equation}
where $K_{BG}(\lambda)$ is the background attenuation of the incident light spectrum. 
We also assume that the specific growth rates $g_1(s)$ and $g_2(s)$ of both phytoplankton species are increasing and saturating functions of the number of absorbed photons available for photosynthesis, i.e. 
\begin{equation}\label{eq:ggg}
g_i(0) = 0,\quad g'_i(s) >0 \quad \text{ for }s \geq 0, \quad g_i(+\infty)<+\infty \quad \text{ for }i=1,2. 
\end{equation}
A common choice of growth function is the Monod equation given by 
\begin{equation}
    g_i(s) = \frac{\bar{g}_i s}{\overline\gamma_i + s}, \quad \text{ i=1,2},
\end{equation}where $\bar{g}_{i}$ is the maximal growth rate of species $i$ and $\overline\gamma_i$ is the half-saturation coefficient.  
Lastly, we assume there is no net movement across the upper and lower boundaries of the water column, resulting in the zero-flux boundary conditions for $x=0, L$. 

\section{Preliminary results}\label{sec:mathresults}
In this section we establish several preliminary theorems for coexistence and competitive exclusion that are used throughout the paper. From the eigenvalue we establish conditions for a single species to persist in absence of a competitor. From this we are able to use the associated linearized eigenvalue problem to establish a sufficient condition for coexistence and competitive exclusion. 

Throughout the paper we refer the readers to the following definition and condition.
    Define the functions $f_i:[0,L] \times[0,\infty)\times [0,\infty)  \to \mathbb{R}$ by:
\begin{equation}\label{eq:f}
f_i(x,p_1,p_2) = g_i\left(\int_{400}^{700} a_i(\lambda) k_i(\lambda) I_{\rm in}(\lambda) \exp\bigg[ -K_{BG}(\lambda)x - \sum_{j=1}^2 k_j(\lambda) p_j\bigg]\right)  -d_i(x).
\end{equation} 
Then it is not hard to verify that, for $i=1,2$, the function $f_i$ satisfies
\begin{description}
\item[(H)] \quad $\displaystyle \frac{\partial f_i}{\partial p_j}<0$ \quad and\quad $\displaystyle \frac{\partial f_i}{\partial x}<0$ \quad for $(x,p_1,p_2) \in [0,L]\times \mathbb{R}_+^2$, \,$j=1,2$.
\end{description}
 Although we only consider the autonomous system here, we remark that most of the theoretical results can be generalized to the case of a temporally periodic environment.

\subsection{Persistence of a single species}\label{sec:SingleSpecies}

In this subsection we characterize the long-term dynamics of system \eqref{eq:full} in the absence of competition, i.e when  $u_{1,0} \equiv 0$ or $u_{2,0} \equiv 0$. We begin by defining the following eigenvalue problem.

\begin{definition}\label{def:eval}
For given constants $D>0$ and $\alpha \in \mathbb{R}$, and given function $h(x) \in C([0,L])$, define $\mu(D,\alpha,h) \in \mathbb{R}$ to be the smallest eigenvalue of the following boundary value problem:
\begin{equation}\label{eq:simpleeigenvalueproblem}
\begin{cases}
D \partial_{xx} \phi - \alpha \partial_x \phi + h(x)\phi  +\mu \phi =0 &\text{ for }(x,t) \in [0,L]\times\mathbb{R^+},\\
D\partial_x \phi - \alpha \phi = 0&\text{ for }(x,t) \in \{0,L\}\times \mathbb{R^+}.
\end{cases}    
\end{equation}
\end{definition}

 The eigenvalue problem given in Definition \ref{def:eval} is well associated to the system \eqref{eq:full} linearized around $E_1$ or $E_2$. The main result of this section is given below and provides a condition for the existence and attractiveness of the semi-trivial solutions $E_1$ and $E_2$.
\begin{proposition}\label{prop:semitrivial}
Suppose 
\begin{description}
\item[(P)] $\mu(D_i,\alpha_i,f_i(x,0,0)) <0$ for $i=1,2$.
\end{description}
Then the system \eqref{eq:full} has exactly two non-negative exclusion equilibria $E_1=(\tilde{u}_1,0)$ and $E_2=(0,\tilde{u}_2)$. 
Moreover, 
$E_1$ (resp. $E_2$) attracts all solutions of \eqref{eq:full} with initial condition $(u_{1,0},u_{2,0})$ such that 
$$
u_{1,0} \geq,\not\equiv 0\,\,\text{ and }\,\,u_{2,0}\equiv 0 \quad  {\rm(resp. }\quad 
u_{1,0} \equiv 0\,\,\text{ and }\,\,u_{2,0}\geq,\not\equiv 0).$$
\end{proposition}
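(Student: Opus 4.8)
The plan is to reduce to a single-species problem and then establish threshold dynamics for a scalar non-local equation. Since the set $\{u_2 \equiv 0\}$ is invariant under \eqref{eq:full} (the $u_2$-equation carries $u_2$ as a factor), for initial data with $u_{2,0}\equiv 0$ the evolution is governed by the scalar equation
\[
\partial_t u = D_1 \partial_x^2 u - \alpha_1 \partial_x u + f_1\Bigl(x,\,\textstyle\int_0^x u(y,t)\,dy,\,0\Bigr)u, \qquad D_1\partial_x u-\alpha_1 u=0 \ \text{ at } x=0,L,
\]
with $f_1$ as in \eqref{eq:f}; species $2$ is identical by symmetry. The conclusion follows once I show that, under (P), this scalar equation has a unique positive steady state $\tilde u_1$ attracting every nonnegative nontrivial datum: this gives both the existence and uniqueness of $E_1$ (hence, together with $E_2$, ``exactly two'' exclusion equilibria) and its attractivity.

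First I would record the basic dynamical properties. Since $g_1$ is bounded, standard a priori estimates give uniform bounds, so the semiflow is dissipative and, by parabolic smoothing, admits a compact global attractor; the strong maximum principle makes it strongly positive for $t>0$. Condition (P) states precisely that $\mu(D_1,\alpha_1,f_1(x,0,0))<0$, i.e. the principal eigenvalue of Definition \ref{def:eval} with $h=f_1(\cdot,0,0)$ has the sign making $u\equiv0$ linearly unstable. Since $u\equiv0$ is the only invariant set on the boundary of the positive cone and (P) makes it a uniform weak repeller, abstract persistence theory yields uniform persistence and, via a fixed-point argument on the persistence attractor, at least one positive steady state $\tilde u_1$. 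Uniqueness I would obtain from (H): as $f_1$ is strictly decreasing in its biomass argument, the per-capita rate is strictly decreasing in the cumulative biomass $\int_0^x u$, so $\kappa \hat u$ is a strict subsolution of the steady-state problem for $\kappa\in(0,1)$; the sliding argument (take the largest $\kappa$ with $\tilde u\ge \kappa\hat u$, then apply the strong maximum principle) forces any two positive steady states to coincide.

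The crux --- and the step I expect to be the main obstacle --- is global attractivity of $\tilde u_1$. The natural route is the monotone dynamical systems theory of \cite{Smith} used in \cite{HsuLou2010,Jiang2019}, but the scalar semiflow here is \emph{not} order-preserving in the pointwise order: through self-shading a solution carrying more biomass near the surface shades itself more strongly at depth, so two ordered data can have their order reversed near $x=L$, and the naive comparison principle fails. The plan is to circumvent this as in the single-species theory of \cite{HsuLou2010}, exploiting that the nonlocality enters only through the cumulative biomass $\int_0^x u$ (equivalently, through the monotone light field $I(\lambda,x,t)$), so the coupling has a definite sign; this is combined with the local asymptotic stability of $\tilde u_1$ (obtained by showing the principal eigenvalue of the linearization at $\tilde u_1$ is of the stabilizing sign, again via (H)), uniform persistence, and dissipativity to exclude any other $\omega$-limit behaviour. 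I would therefore expect the bulk of the work, and the only genuinely delicate point, to be the comparison/convergence argument upgrading local stability and uniqueness of $\tilde u_1$ to global attractivity within the face $\{u_2\equiv 0\}$; the remaining assertions then follow at once, with $E_2$ handled symmetrically.
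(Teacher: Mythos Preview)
Your outline is correct in substance and, in fact, supplies far more than the paper does: the paper's own proof is a one-line citation to \cite[Proposition~3.11]{Jiang2019}, so you are reconstructing the content of that reference rather than diverging from it.

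One sharpening is worth making. You correctly flag that the scalar semiflow is not order-preserving in the pointwise cone, and you plan to ``circumvent'' this by exploiting the cumulative-biomass structure. The cleaner statement is that the single-species semiflow \emph{is} strongly monotone, just with respect to the cone
\[
\mathcal{K}_1=\Bigl\{\phi\in C([0,L]):\ \textstyle\int_0^x\phi(y)\,dy\ge0\ \text{for all }x\Bigr\},
\]
which the paper itself introduces (see \eqref{eq:interior} and the surrounding discussion) for the two-species problem. Once you work in $\mathcal{K}_1$, the standard monotone machinery applies directly: one constructs $\mathcal{K}_1$-ordered sub- and supersolutions (e.g.\ $\epsilon\phi$ with $\phi$ the principal eigenfunction from {\bf(P)}, and $Me^{-\alpha_1 x/D_1}$ for $M$ large as in Lemma~\ref{lem:upperboundsM2}), and monotone iteration in $\mathcal{K}_1$ then sandwiches every trajectory between convergent sequences, yielding global attractivity of the unique positive steady state. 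This replaces the somewhat ad hoc combination of local stability, persistence, and $\omega$-limit exclusion you sketch, and is how \cite{Jiang2019,HsuLou2010} actually proceed. Your uniqueness argument via sublinearity and sliding is also correct, but note that the ``strong maximum principle'' step there must likewise be carried out in the $\mathcal{K}_1$-order (i.e.\ applied to the integrated quantity $\int_0^x(\tilde u-\kappa\hat u)$), not pointwise.
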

\begin{proof}
See~\cite[Proposition 3.11]{Jiang2019}.
\end{proof}

 The following corollary gives an explicit condition for  {\bf(P)}.
\begin{corollary}\label{corr:Pholds}
Let $f_i$ be defined in \eqref{eq:f}. If 
\begin{equation}\label{eq:persist1}
 \int_0^L e^{\alpha_i x/D_i} f_i(x,0,0)\,dx >  0 \quad \text{ for }i=1,2,
\end{equation}
then {\bf(P)} holds and the conclusions of Proposition \ref{prop:semitrivial} concerning the existence and attractivity of semi-trivial solutions $E_1$ and $E_2$ hold.  
\end{corollary}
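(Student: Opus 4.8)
The plan is to derive condition \textbf{(P)} from a single well-chosen test function in the variational characterization of the principal eigenvalue. Fix $i \in \{1,2\}$ and abbreviate $D = D_i$, $\alpha = \alpha_i$, and $h(x) = f_i(x,0,0)$, so the goal reduces to showing $\mu(D,\alpha,h) < 0$ whenever $\int_0^L e^{\alpha x/D} h\,dx > 0$. First I would recast the problem \eqref{eq:simpleeigenvalueproblem} in self-adjoint (Sturm--Liouville) form by multiplying through by the positive weight $m(x) := e^{-\alpha x/D}$. This turns the drift-diffusion operator into divergence form, $(m D \phi')' + m\,(h+\mu)\phi = 0$, while the no-flux condition $D\phi' - \alpha\phi = 0$ becomes $m D \phi' = \alpha m \phi$ at $x = 0, L$. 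Since $m$ is smooth and strictly positive, the resulting operator is self-adjoint on the weighted space $L^2\big((0,L);m\,dx\big)$ with compact resolvent; hence its smallest eigenvalue $\mu = \mu(D,\alpha,h)$ is the bottom of the spectrum and admits the Rayleigh-quotient characterization
$$
\mu = \min_{\substack{\psi \in H^1(0,L)\\ \psi \not\equiv 0}} \frac{Q[\psi]}{\int_0^L m\psi^2\,dx}, \qquad Q[\psi] := \int_0^L mD(\psi')^2\,dx - \int_0^L m h \psi^2\,dx - \big[\alpha m \psi^2\big]_0^L ,
$$
which is obtained from the weak formulation by integration by parts, the boundary term in $Q$ being exactly the contribution of the Robin condition $mD\phi' = \alpha m\phi$.

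Next I would substitute the explicit test function $\psi(x) = e^{\alpha x/D}$, which is in fact the principal eigenfunction of \eqref{eq:simpleeigenvalueproblem} in the degenerate case $h \equiv 0$ (with eigenvalue $0$); this is precisely why it is the natural, and sharp, choice here. A direct computation then shows that the gradient term $\int_0^L mD(\psi')^2\,dx = \alpha\big(e^{\alpha L/D}-1\big)$ cancels exactly against the boundary term $\big[\alpha m \psi^2\big]_0^L = \alpha\big(e^{\alpha L/D}-1\big)$, while $m\psi^2 = e^{\alpha x/D}$. The Rayleigh quotient therefore collapses to
$$
\mu \;\le\; \frac{Q\big[e^{\alpha x/D}\big]}{\int_0^L e^{\alpha x/D}\,dx} \;=\; \frac{-\int_0^L e^{\alpha x/D} h(x)\,dx}{\int_0^L e^{\alpha x/D}\,dx}.
$$
The denominator is positive, and by hypothesis \eqref{eq:persist1} the numerator is strictly negative, so $\mu(D_i,\alpha_i,f_i(x,0,0)) < 0$. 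Carrying this out for $i = 1, 2$ establishes \textbf{(P)}, and the existence and attractivity of $E_1$ and $E_2$ follow at once from Proposition \ref{prop:semitrivial}.

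The two integral evaluations are routine; the step demanding the most care is the setup of the variational formula itself, namely keeping the sign and form of the Robin boundary term in $Q[\psi]$ correct and confirming that the ``smallest eigenvalue'' of Definition \ref{def:eval} genuinely coincides with the bottom of the spectrum of the symmetrized operator (this holds because multiplication by the smooth positive weight $m$ does not alter the spectrum). An alternative that avoids the min-max formula is to pair the positive principal eigenfunction of \eqref{eq:simpleeigenvalueproblem} against the principal eigenfunction of the adjoint problem via Green's identity; however, because the choice $\psi = e^{\alpha x/D}$ forces the derivative and boundary contributions to cancel exactly, the single-test-function route is the cleanest, and I would adopt it.
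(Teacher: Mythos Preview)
Your argument is correct, but it differs from the paper's route. The paper simply invokes Lemma~\ref{lem:eigen} from the Appendix, whose proof works directly with the principal eigenfunction $\phi$: after the substitution $\tilde\phi=e^{-\alpha x/D}\phi$ (which turns the Robin problem into a Neumann problem with weight $e^{\alpha x/D}$), one divides the resulting equation by $\tilde\phi$ and integrates to obtain the exact identity
\[
-\mu\int_0^L e^{\alpha x/D}\,dx \;=\; D\int_0^L e^{\alpha x/D}\frac{|\partial_x\tilde\phi|^2}{\tilde\phi^2}\,dx \;+\;\int_0^L e^{\alpha x/D}h(x)\,dx,
\]
from which $\mu<0$ is immediate under \eqref{eq:persist1}. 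Your approach instead sets up the Rayleigh quotient for the symmetrized problem and inserts the test function $\psi=e^{\alpha x/D}$, obtaining the one-sided bound $\mu\le -\int e^{\alpha x/D}h\big/\int e^{\alpha x/D}$. The two are closely related: your test function $\psi=e^{\alpha x/D}$ corresponds, under the paper's substitution, to the constant test function $\tilde\phi\equiv 1$ in the Neumann formulation, which is why the gradient and boundary terms cancel exactly. What your method buys is that it never touches the (unknown) eigenfunction, so it is arguably more self-contained; what the paper's identity buys is that it also handles the borderline case $\int e^{\alpha x/D}h=0$ (case (ii) of Lemma~\ref{lem:eigen}), which your upper bound alone would not resolve but which is not needed for this corollary. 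Your remark that the crux is ``confirming that the smallest eigenvalue of Definition~\ref{def:eval} coincides with the bottom of the spectrum of the symmetrized operator'' is well placed; the cleanest justification is precisely the paper's substitution $\tilde\phi=e^{-\alpha x/D}\phi$, which exhibits the problem as a genuine self-adjoint (weighted Neumann) eigenvalue problem with the same spectrum.
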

\begin{proof}
Thanks to Lemma \ref{lem:eigen} in the Appendix, \eqref{eq:persist1} implies {\rm{\bf(P)}}. The conclusion thus follows from Proposition \ref{prop:semitrivial}.
\end{proof}

In terms of the physical parameters, \eqref{eq:persist1} reads
\begin{equation}\label{eq:persist1'}   
\int_0^L e^{\alpha_i x/D_i}  g_i\left(\int_{400}^{700} a_i(\lambda) k_i(\lambda) I_{\rm in}(\lambda) e^{ -K_{BG}(\lambda)x}\right)\,dx >\int_0^L e^{\alpha_i x/D_i}   d_i(x) \,dx,
\end{equation}
giving an explicit condition for the existence and attractivity of the exclusion equilibrium $E_1$ and $E_2$.

\subsection{Coexistence in two species competition}\label{sec:twospecies}

We now consider the outcomes of a two species competition and establish sufficient conditions for coexistence. We begin by connecting the system \eqref{eq:full} to the general theory of monotone dynamical systems~\cite{Smith}. For this purpose, consider the cone $\mathcal{K} = \mathcal{K}_1 \times (- \mathcal{K}_1)$, where
\begin{equation}
    \mathcal{K}_1 = \left\{ \phi \in C([0,L])\,:\, \int_0^x \phi(y)\,dy \geq 0 \quad \text{ for all }x \in [0,L] \right\}.
\end{equation}
The cone $\mathcal{K}$ has non-empty interior, i.e. ${\rm Int}\,\mathcal{K}= ({\rm Int}\,\mathcal{K}_1)\times (-{\rm Int}\,\mathcal{K}_1)$, where
\begin{equation}\label{eq:interior}
   \rm Int\, \mathcal{K}_1 = \left\{ \phi \in C([0,L])\,:\,\phi(0)>0\, \int_0^x \phi(y)\,dy > 0 \quad \text{ for all }x \in [0,L] \right\}.
\end{equation}
For $i=1,2$,  let $(u_i(x,t),v_i(x,t))$ be two sets of solutions of \eqref{eq:full} with initial conditions $(u_{i,0}(x),v_{i,0}(x))$. Since $f_1$ and $f_2$ satisfy {\bf(H)}, it follows by~\cite[Corollary 3.4]{Jiang2019} that 
$$
(u_{2,0} - u_{1,0}, v_{2,0}-v_{1,0}) \in \mathcal{K} \setminus\{(0,0)\} \quad \Rightarrow \quad  (u_{2,0} - u_{1,0}, v_{2,0}-v_{1,0})(\cdot,t) \in {\rm Int}\,K \quad \forall t>0.
$$
In other words, the system \eqref{eq:full} generates a semiflow that is strongly monotone with respect to the cone $\mathcal{K}$. It follows from the property of monotone dynamical systems that the long-time dynamics of the system \eqref{eq:full} can largely be determined by the local stability of the equilibria.

We now characterize the local stability of $E_1$. 
\begin{proposition}[{~\cite[Proposition 4.5]{Jiang2019}}] \label{prop:4.5}
Suppose the parameters are chosen such that {\bf(P)} holds, i.e. the two species system has two exclusion equilibria $E_1=(\tilde{u}_1,0)$ and $E_2 = (0,\tilde{u}_2)$. 
\begin{itemize}
    \item[{\rm(a)}]
    The equilibria $E_1$ is linearly stable (resp. linearly unstable) if $\mu_u >0$ (resp. $\mu_u <0$), where
\begin{equation}\label{eq:muu}
\mu_u:=\mu(D_2,\alpha_2,f_2(x,\int_0^x\tilde{u}_1(y)\,dy,0)).
\end{equation}
    \item[{\rm(b)}]     The equilibria $E_2$ is linearly stable (resp. linearly unstable) if $\mu_v >0$ (resp. $\mu_v <0$), where
\begin{equation}\label{eq:muv}
\mu_v:=\mu(D_1,\alpha_1,f_1(x,\int_0^x\tilde{u}_2(y)\,dy,0)).
\end{equation}
\end{itemize}

\end{proposition}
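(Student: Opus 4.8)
The plan is to compute the linearization of \eqref{eq:full} about $E_1=(\tilde u_1,0)$ and read off its spectrum from a block-triangular structure. First I would rewrite the reaction terms using the non-local functions $f_i$ from \eqref{eq:f}, writing $P_j[u]:=\int_0^x u(y,t)\,dy$, so that \eqref{eq:full} reads $\partial_t u_i = D_i\partial_x^2 u_i - \alpha_i \partial_x u_i + f_i(x,P_1[u_1],P_2[u_2])\,u_i$. Substituting $u_1=\tilde u_1+\psi_1$, $u_2=0+\psi_2$ and discarding quadratic terms, the perturbation $(\psi_1,\psi_2)$ solves a linear system. The crucial observation is that in the $\psi_2$-equation the variations of $f_2$ through $P_1[\psi_1]$ and $P_2[\psi_2]$ are all multiplied by the vanishing component $u_2\equiv 0$, and therefore drop out; what remains is the decoupled equation $\partial_t\psi_2 = D_2\partial_x^2\psi_2 - \alpha_2\partial_x\psi_2 + f_2\big(x,\int_0^x\tilde u_1,0\big)\psi_2$ together with the zero-flux condition $D_2\partial_x\psi_2-\alpha_2\psi_2=0$ at $x=0,L$. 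This is precisely the eigenvalue problem of Definition \ref{def:eval} with $h(x)=f_2(x,\int_0^x\tilde u_1,0)$, whose principal eigenvalue is $\mu_u$; separating variables $\psi_2=e^{-\mu_u t}\phi(x)$ shows the invasion mode evolves like $e^{-\mu_u t}$, decaying iff $\mu_u>0$.

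Second, I would record that the linearized operator is block upper-triangular: the $\psi_1$-equation retains a coupling to $\psi_2$ through the term $\tilde u_1\,\partial_{p_2}f_1\cdot P_2[\psi_2]$, but the $\psi_2$-equation has no $\psi_1$ dependence. Hence the spectrum of the full linearization is the union of the spectra of its two diagonal blocks, and $E_1$ is linearly stable exactly when both blocks are stable. The $(2,2)$-block contributes stability iff $\mu_u>0$, as above, so it remains only to show that the $(1,1)$-block---the linearization of the single-species $u_1$-equation at its steady state $\tilde u_1$---is always strictly stable; granting this, the sign of $\mu_u$ alone decides, which is the assertion of part (a). Part (b) then follows by the identical argument with the roles of the two species exchanged.

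The main obstacle is the strict stability of the resident $(1,1)$-block, which carries the non-local term $\tilde u_1\,\partial_{p_1}f_1\cdot P_1[\psi_1]$ with $\partial_{p_1}f_1<0$ by {\bf(H)}. My plan here is to use that $\tilde u_1$ itself solves $D_1\partial_x^2\tilde u_1-\alpha_1\partial_x\tilde u_1+f_1(x,\int_0^x\tilde u_1,0)\tilde u_1=0$, so $\tilde u_1>0$ is the principal eigenfunction of the \emph{frozen} problem in Definition \ref{def:eval} with $h(x)=f_1(x,\int_0^x\tilde u_1,0)$ and eigenvalue $0$; thus $\mu(D_1,\alpha_1,f_1(x,\int_0^x\tilde u_1,0))=0$. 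The full $(1,1)$-block differs from this frozen operator by the extra term $\tilde u_1\,\partial_{p_1}f_1\,P_1[\cdot]$, which is a sign-definite negative perturbation in the order induced by the cone $\mathcal{K}_1$. I would argue, via a Krein--Rutman/strong-monotonicity argument adapted to $\mathcal{K}_1$ (the same order structure used to establish strong monotonicity of \eqref{eq:full}), that such a perturbation strictly lowers the principal eigenvalue below $0$, i.e. the growth rate of the resident block is strictly negative. The delicate point is precisely that this perturbation is non-local, so pointwise maximum-principle comparisons do not apply directly and one must instead exploit the positivity/monotonicity of the associated semiflow on $\mathcal{K}_1$; this is moreover consistent with the fact that $\tilde u_1$ is globally attracting for the single-species dynamics by Proposition \ref{prop:semitrivial}.
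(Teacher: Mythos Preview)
Your approach is essentially the one the paper takes: write down the linearization at $E_1$, observe that the $\psi_2$-equation decouples and is exactly the eigenvalue problem of Definition~\ref{def:eval} with $h(x)=f_2(x,\int_0^x\tilde u_1,0)$, and conclude that stability of $E_1$ is governed by $\mu_u$. The paper sets up the same block-triangular linearized system \eqref{eq:lineigproblem} but, for the crucial step ``the $(1,1)$-block never destabilizes, so $\mathrm{sgn}\,\mu_1=\mathrm{sgn}\,\mu_u$'', it simply invokes \cite[Proposition~4.5]{Jiang2019} rather than arguing it directly.

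Your attempt to supply that step is on the right track but not quite complete. Two remarks. First, a sign slip: in the convention of Definition~\ref{def:eval} the growth rate is $-\mu$, so adding the negative nonlocal term $-\tilde u_1 g_1'(\gamma_1)A_{11}\int_0^x\phi$ to the operator \emph{raises} the principal eigenvalue $\mu$ of the $(1,1)$-block strictly above $0$ (equivalently, lowers the growth rate below $0$); your conclusion ``growth rate strictly negative'' is correct, but the phrase ``strictly lowers the principal eigenvalue below $0$'' is reversed. Second, the substantive gap: the comparison you want---that a $\mathcal K_1$-positive nonlocal perturbation of a local operator with principal eigenvalue $0$ yields a strictly stable operator---does not follow from a one-line Krein--Rutman appeal. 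One must first establish that the nonlocal $(1,1)$-block itself has a principal eigenvalue with $\mathcal K_1$-positive eigenfunction (this uses the strong positivity of the resolvent/semigroup on $\mathcal K_1$, which is precisely the single-species monotonicity machinery behind Proposition~\ref{prop:semitrivial}), and then carry out the comparison in that cone. This is exactly the content that the paper outsources to \cite{Jiang2019}; your sketch correctly identifies it as the delicate point but does not close it.
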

\begin{proof}
We only prove assertion (a), since assertion (b) follows by a similar argument.
To determine the local stability of the exclusion equilibrium $E_1$, we consider
the associated linearized eigenvalue problem at $E_1=(\tilde{u}_1,0)$, which is given by
\begin{equation} \label{eq:lineigproblem}
    \begin{cases}
     D_1 \phi_{xx} - \alpha_1 \phi_x + f_1(x,\int_0^x\tilde{u}_1(y)\,dy,0) \phi &\\ 
     \quad - \tilde{u}_1 g_1'(\gamma_1) 
     [A_{11}(x) \int_0^x \phi(y)\,dy + A_{12}(x) \int_0^x \psi(y)\,dy] + \mu \phi = 0 &\text{ in }[0,L],\\
    D_2 \psi_{xx} - \alpha_2 \psi_x + f_2(x,\int_0^x\tilde{u}_1(y)\,dy,0)\psi + \mu \psi = 0 &\text{ in }[0,L],\\
    D_1\phi_x - \alpha_1 \phi = D_2 \psi_x - \alpha_2 \psi = 0 & \text{ for }x = 0,L.
    \end{cases}
\end{equation}
where (recall that we have taken $a_i\equiv 1$) 
\begin{equation}
A_{ij}(x) = \int_{400}^{700} k_i(\lambda)I(\lambda,x)k_j(\lambda)\,d\lambda
\end{equation}
and 
\begin{equation}
\gamma_i(x)=\int_{400}^{700} k_i(\lambda)I_{in}(\lambda) \exp \left[ - K_{BG}(\lambda)x - k_1(\lambda) \int_0^x \tilde{u}_1(y)\,dy \right]\,d\lambda.
\end{equation}
Thanks to the monotonicity of the associated semiflow, 
the linearized problem \eqref{eq:lineigproblem} has a principal eigenvalue in the sense that $\mu_1 \leq \textup{Re}\,\mu$ for all eigenvalues $\mu$ of \eqref{eq:lineigproblem}, and that the corresponding eigenfunction can be chosen in $\mathcal{K} \setminus\{(0,0)\}$. In particular, $E_1$ is linearly stable (resp. linearly unstable) if $\mu_1 >0$ (resp. $\mu_1 <0$). 

Next, we apply~\cite[Proposition 4.5]{Jiang2019}, which says that 
$$
{\rm sgn}\, \mu_1 = {\rm sgn}\, \mu_u, 
$$
where $\mu_u$, given in \eqref{eq:muu}, is the principal eigenvalue of the second equation in \eqref{eq:lineigproblem}. 
Hence, $E_1$ is linearly stable (resp. linearly unstable) if $\mu_1 >0$ (resp. $\mu_1 <0$). 
\end{proof}

If both $E_1$ and $E_2$ exist we can conclude the existence of a positive equilibrium solution by the following proposition. 
\begin{proposition}
Assume {\bf(P)}, so that both semi-trivial equilibria $E_1$ and $E_2$ exist. Suppose further that
$$
\mu_u \cdot \mu_v >0,
$$
then \eqref{eq:full} has at least one positive equilibrium $(\hat{u}_1,\hat{u}_2)$. 
\end{proposition}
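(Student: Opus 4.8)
The plan is to exploit the strongly monotone structure already recorded and to reduce the existence of a positive equilibrium to a fixed-point index computation on a suitable order interval. First I would observe that, in the competitive order induced by $\mathcal K=\mathcal K_1\times(-\mathcal K_1)$, the two semi-trivial equilibria are comparable: since $E_1-E_2=(\tilde u_1,-\tilde u_2)$ with $\tilde u_1\in\mathcal K_1$ and $-\tilde u_2\in-\mathcal K_1$, we have $E_2\le_{\mathcal K}E_1$. Let $S_t$ be the semiflow generated by \eqref{eq:full} and $Q:=S_\tau$ its time-$\tau$ map for some fixed $\tau>0$. Because $E_1,E_2$ are equilibria and $S_t$ is monotone, the order interval $[E_2,E_1]_{\mathcal K}:=\{w:E_2\le_{\mathcal K}w\le_{\mathcal K}E_1\}$ is positively invariant; parabolic smoothing and the a priori bounds for \eqref{eq:full} on $[0,L]$ make $Q$ compact and ensure $Q([E_2,E_1]_{\mathcal K})\subseteq[E_2,E_1]_{\mathcal K}$. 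The positive equilibria of \eqref{eq:full} are precisely the fixed points of $Q$ lying in the interior of this interval.

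Next I would locate all fixed points of $Q$ on the boundary of $[E_2,E_1]_{\mathcal K}$. A direct check shows the trivial state $(0,0)$ also lies in $[E_2,E_1]_{\mathcal K}$, and by the strong maximum principle any boundary fixed point must have a vanishing component; hence the boundary fixed points are exactly $E_1$, $E_2$ and $(0,0)$. The heart of the matter is then to assign a fixed-point index to each. Using the index theory for strongly monotone compact maps (as developed by Dancer and carried out for competitive systems by Hsu, Smith and Waltman; see also \cite{Smith}), one has $i(Q,\cdot)=1$ at a linearly stable such fixed point and $i(Q,\cdot)=0$ at a linearly unstable one, provided the relevant principal eigenfunction lies in $\mathrm{Int}\,\mathcal K$. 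Under {\bf(P)} the state $(0,0)$ is repelling into the interval, so $i(Q,(0,0))=0$; by Proposition \ref{prop:4.5} and its analogue for $E_2$, the indices of $E_1$ and $E_2$ are governed by the signs of $\mu_u$ and $\mu_v$.

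Finally I would invoke the normalization and additivity of the index. The order interval is closed, convex and $Q$-invariant with $Q$ compact, so by Schauder normalization $i(Q,[E_2,E_1]_{\mathcal K})=1$. If $\mu_u,\mu_v>0$ (both semi-trivial equilibria stable) the boundary contributions sum to $1+1+0=2$; if $\mu_u,\mu_v<0$ (both unstable) they sum to $0+0+0=0$. In either case the boundary total differs from $1$, so by additivity $Q$ has at least one further fixed point in the interior of $[E_2,E_1]_{\mathcal K}$. This is a positive equilibrium $(\hat u_1,\hat u_2)$, and strong monotonicity with the strong maximum principle upgrades nonnegativity to strict positivity of both components.

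The main obstacle is the index bookkeeping of the middle step, not the topological conclusion: I must verify that linear stability of the semi-trivial equilibria, expressed through the PDE principal eigenvalues $\mu_u,\mu_v$, translates faithfully into the fixed-point index of $Q$. Concretely this requires checking that the principal eigenfunctions at $E_1,E_2$ (and the invading eigenvector at $(0,0)$) lie in $\mathrm{Int}\,\mathcal K$, so that the spectral-radius hypotheses of the monotone index lemmas apply, and that $\mathrm{sgn}\,\mu_u$ (resp.\ $\mathrm{sgn}\,\mu_v$) matches the location of the spectral radius of $DQ$ relative to $1$; the compactness and dissipativity making $i(Q,[E_2,E_1]_{\mathcal K})=1$ well defined are routine and need only be recorded. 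As an index-free alternative, the both-unstable case follows from a monotone iteration, the unstable principal eigenfunction at $E_2$ yielding a strict subsolution just above $E_2$ whose forward orbit increases to a positive equilibrium strictly below $E_1$, while the both-stable case follows from the Dancer--Hess connecting-orbit lemma, since no orbit can join two stable equilibria and an intermediate equilibrium must therefore exist.
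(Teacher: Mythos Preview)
Your argument is correct and is, in substance, the same route the paper takes: the paper's proof consists of the single observation that $\mu_u\mu_v>0$ forces $E_1$ and $E_2$ to share the same stability type, after which existence of a positive equilibrium is read off directly from \cite[Remark~33.2 and Theorem~35.1]{Hess}. What you have written is essentially an unpacking of those cited results---the fixed-point index computation on the order interval $[E_2,E_1]_{\mathcal K}$, together with the Dancer--Hess connecting-orbit alternative---so the underlying mathematics is identical, only presented at a finer level of detail. Your bookkeeping (invariance of the interval, location of boundary fixed points, index values $0,1,1$ or $0,0,0$ versus the total index $1$) is accurate, and the caveat you flag about matching $\mathrm{sgn}\,\mu_u,\mathrm{sgn}\,\mu_v$ to the spectral radius of $DQ$ is exactly the content of Proposition~\ref{prop:4.5} combined with the strong monotonicity recorded earlier in the paper.
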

\begin{proof}
If $\mu_u \cdot \mu_v >0$, then the exclusion equilibria $E_1$ and $E_2$ are either both linearly stable or both linearly unstable. The existence of positive equilibrium thus follows from~\cite[Remark 33.2 and Theorem 35.1]{Hess}.
\end{proof}
In case both $E_1$ and $E_2$ are linearly unstable, both species persist in a robust manner. 
\begin{proposition}\label{prop:suffcoex}
Assume {\bf(P)} so that the semi-trivial equilibria $E_1,E_2$ exist. Suppose 
\begin{equation}\label{eq:suffcoexist}
\mu_u <0 \quad \text{ and }\quad \mu_v <0,
\end{equation}
(i.e. both $E_1$ and $E_2$ are unstable) then the following holds.
\begin{itemize}
    \item[{\rm(i)}] There exists $\delta_0>0$ that is independent of the initial data such that
    $$
    \liminf_{t \to \infty} \min_{i=1,2} \int_{0<x<L} u_i(x,t) \geq \delta_0;
    $$
    \item[{\rm(ii)}] System \eqref{eq:full} has at least one coexistence, equilibrium $(\hat{u}_1,\hat{u}_2)$ that is locally asymptotically stable.
\end{itemize}
\end{proposition}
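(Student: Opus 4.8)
The plan is to read the two instability hypotheses in \eqref{eq:suffcoexist} as a \emph{mutual invasibility} condition and then feed this into the monotone structure already established for \eqref{eq:full}. By Proposition \ref{prop:4.5}, $\mu_u<0$ means species $2$ has a positive per-capita growth rate when introduced at the single-species state $E_1$, and $\mu_v<0$ means species $1$ grows when introduced at $E_2$; equivalently, each semi-trivial equilibrium is linearly unstable in the direction of the absent species. Part (i) is then a uniform persistence assertion and part (ii) the existence of a stable coexistence state, and I would obtain both from mutual invasibility together with the competitive ordering by the cone $\mathcal{K}$.

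For (i) I would invoke the abstract theory of uniform persistence in the dissipative/monotone framework of \cite{Smith}, whose hypotheses I must verify. First, the semiflow is point dissipative and asymptotically smooth, hence admits a global attractor: this follows from a~priori bounds for \eqref{eq:full}, since by \eqref{eq:ggg} the growth rates $g_i$ are bounded above and $I(\lambda,x,t)\le I_{\rm in}(\lambda)$, which yields ultimate boundedness. Second, the forward-invariant boundary $\{u_1\equiv 0\}\cup\{u_2\equiv 0\}$ carries exactly the equilibria $E_0=(0,0)$, $E_1$ and $E_2$, and by Proposition \ref{prop:semitrivial} the single-species dynamics have no other recurrence, so these form an acyclic covering of the boundary chain-recurrent set. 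Third, each must be a uniform weak repeller for the interior: repulsivity of $E_0$ follows from hypothesis \textbf{(P)}, while that of $E_1$ and $E_2$ follows from $\mu_u<0$ and $\mu_v<0$ respectively, because a negative principal eigenvalue of the linearization transverse to the boundary forces exponential growth of the absent species near that face. The persistence theorem then delivers the data-independent bound $\delta_0>0$ asserted in (i).

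For (ii) I would use strong monotonicity directly. Since $E_2\le_{\mathcal K}E_1$ in the competitive order, the relevant dynamics live on the invariant order interval $[E_2,E_1]_{\mathcal K}$. Because $E_2$ is unstable with principal eigenfunction in $\mathcal{K}$, I can construct a strict subsolution $\underline w\in\operatorname{Int}\mathcal K$ just above $E_2$ whose orbit $T(t)\underline w$ is nondecreasing in $t$ and converges to an equilibrium $E_*$ with $E_2<_{\mathcal K}E_*\le_{\mathcal K}E_1$; since $E_1$ is unstable it cannot be a monotone limit from below, forcing $E_*<_{\mathcal K}E_1$, so $E_*$ is a genuine coexistence state. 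Symmetrically, the instability of $E_1$ produces a strict supersolution below $E_1$ whose orbit decreases to a coexistence equilibrium $E^*$. The order-squeezed, monotone convergence to such an extremal equilibrium yields one-sided attractivity, and strong monotonicity upgrades this to convergence on a full neighborhood, i.e.\ local asymptotic stability; this is the two-species competition dichotomy of Dancer--Hess and Hsu--Smith--Waltman type. Mere existence of a positive equilibrium also follows immediately from the preceding proposition, since $\mu_u\mu_v>0$ here.

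I expect the main obstacle to be the verification of the abstract hypotheses underlying (i): establishing point dissipativity and asymptotic smoothness of the \emph{non-local} semiflow, and pinning down the uniform weak-repeller property of $E_1$ and $E_2$ precisely in terms of the eigenvalue signs $\mu_u,\mu_v<0$ via the linearized problem \eqref{eq:lineigproblem} and Proposition \ref{prop:4.5}. Once uniform persistence and the global attractor are in hand, part (ii) is comparatively routine within the monotone framework.
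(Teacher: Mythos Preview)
Your proposal is correct and follows essentially the same route as the paper. The paper's own proof is a single sentence citing \cite[Theorem 33.3]{Hess}: once $\mu_u<0$ and $\mu_v<0$ render both $E_1$ and $E_2$ linearly unstable, the Dancer--Hess theory for strongly monotone competitive systems immediately yields uniform persistence and a stable coexistence equilibrium. What you have written is precisely an unpacking of the hypotheses and mechanism behind that citation---dissipativity, acyclic boundary dynamics, weak repulsivity from the eigenvalue signs for~(i), and the sub/super-solution construction on the order interval $[E_2,E_1]_{\mathcal K}$ for~(ii)---so there is no substantive difference in approach, only in the level of detail displayed.
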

\begin{proof}
By \eqref{eq:suffcoexist}, both exclusion equilibria $E_1, E_2$ are linearly unstable. The result follows from~\cite[Theorems 33.3]{Hess}.
\end{proof}

The signs of the principal eigenvalue $\mu_u$ and $\mu_v$ are often difficult to determine. We now establish an explicit condition for coexistence. To this end, observe from Corollary \ref{corr:Pholds}  and \eqref{eq:persist1'} that a sufficient condition for
\begin{equation}
    \mu_v= \mu(D_1,\alpha_1,f_1(x,0,\int_0^x \tilde{u}_{2}(y,t)\,dy))<0.
\end{equation}
is given by
\begin{multline}\label{eq:suffcoexist'}   
 \int_0^L e^{\alpha_1 x/D_1}  g_1\left(\int_{400}^{700} a_1(\lambda) k_1(\lambda) I_{\rm in}(\lambda) e^{ -K_{BG}(\lambda)x- k_2(\lambda) \int_0^x \tilde{u}_2(y,t)\,dy}\right)\,dx 
 \\>\int_0^L e^{\alpha_1 x/D_1}   d_1(x,t) \,dx.
\end{multline}

For $i=1,2$, we will obtain an explicit upper bound for $\int_0^x \tilde{u}_i(y)\,dy$. To this end, define 
$$
M_i:= \inf\left\{M >0:\,\, \int_0^xf_i(y,0, M\int_0^ye^{-\alpha_i z/D_i}\,dz)e^{-\alpha_i y/D_i}\,dy \leq 0 \text{ in }[0,L]\times[0,T]\right\}.  
$$
\begin{lemma}\label{lem:upperboundsM2}
For $i=1,2$, 
$$
\int_0^x\tilde{u}_i(y,t)\,dy \leq \frac{M_i D_i}{\alpha_i} (1-e^{-\alpha_i x/D_i}) 
\quad \text{ for all }(x,t) \in [0,L]\times[0,T]. 
$$
\end{lemma}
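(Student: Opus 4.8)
The plan is to establish the bound by a comparison argument in the integrated variable $w_i(x):=\int_0^x \tilde u_i(y)\,dy$, using the weight $e^{-\alpha_i x/D_i}$, which both symmetrizes the advection--diffusion operator and is compatible with the no-flux boundary condition. First I would record that, since the competitor is absent at $E_i$, the profile $\tilde u_i$ solves the single-species steady state $D_i\tilde u_i'' - \alpha_i \tilde u_i' + f_i(x,\cdot)\tilde u_i = 0$ with $D_i\tilde u_i' - \alpha_i \tilde u_i = 0$ at $x=0,L$, where the cumulative self-shading $\int_0^x \tilde u_i$ enters the relevant slot of $f_i$. Writing the equation in flux form $\partial_x J_i + f_i \tilde u_i = 0$ with $J_i:=D_i\tilde u_i' - \alpha_i \tilde u_i$ and integrating once from $0$, the boundary condition $J_i(0)=0$ yields $J_i(x) = -\int_0^x f_i(y,\cdot)\tilde u_i(y)\,dy$. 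Because $\tilde u_i\ge 0$, the map $w_i$ is nondecreasing with $w_i(0)=0$, and hypothesis \textbf{(H)} makes $y\mapsto f_i\big(y,\int_0^y\tilde u_i\big)$ monotone decreasing in $y$; this is the structural input I will exploit.

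Next I would introduce the candidate supersolution $\bar w_i(x):=M_i\int_0^x e^{-\alpha_i z/D_i}\,dz = \frac{M_iD_i}{\alpha_i}\big(1-e^{-\alpha_i x/D_i}\big)$, whose derivative is $\bar w_i'(x)=M_i e^{-\alpha_i x/D_i}$. The point is that substituting the associated density $M_i e^{-\alpha_i x/D_i}$ into the steady-state operator and integrating against the weight $e^{-\alpha_i y/D_i}$ collapses the nonlocal term to exactly the quantity appearing in the definition of $M_i$, namely $\int_0^x e^{-\alpha_i y/D_i} f_i\big(y,\,M_i\int_0^y e^{-\alpha_i z/D_i}\,dz\big)\,dy$. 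By the definition of $M_i$ as an infimum, this quantity is $\le 0$ for all $x\in[0,L]$, the borderline inequality holding at $M=M_i$ by continuity of the integral in $M$ together with the monotone decrease of $f_i$ in its biomass argument from \textbf{(H)}. Thus $\bar w_i$ plays the role of a supersolution, $w_i$ of a solution, of the same integrated problem, and both vanish at $x=0$.

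Finally I would conclude $w_i\le \bar w_i$ on $[0,L]$ by comparison. The main obstacle is that the equation for $w_i$ is nonlocal --- the growth term $f_i$ depends on the whole history $\int_0^y \tilde u_i$ --- so the classical maximum principle does not apply verbatim. I would close this gap in one of two equivalent ways: (i) invoke the strong monotonicity of the semiflow with respect to the cone $\mathcal{K}_1$ established around \eqref{eq:interior}, which is precisely the order in which ``$\int_0^x(\,\cdot\,)\ge 0$'' is preserved, so that a supersolution/solution comparison in $\mathcal{K}_1$ is legitimate; or (ii) run a first-touching/continuity argument in the scaling parameter $M$, using that $f_i$ is decreasing in the self-shading variable to propagate the inequality past any would-be contact point, and then letting $M\downarrow M_i$. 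Either route reduces the ordering of $w_i$ and $\bar w_i$ to the sign condition defining $M_i$; the care needed lies in handling the advective boundary contributions at $x=0,L$ and in verifying that the monotone dependence is strong enough to rule out interior contact, which is exactly where hypothesis \textbf{(H)} is essential.
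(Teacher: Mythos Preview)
Your proposal is correct and follows essentially the same approach as the paper: you recognize that the definition of $M_i$ is exactly the condition making $M_i e^{-\alpha_i x/D_i}$ a super-solution of the single-species problem, and then conclude by comparison in the order induced by $\mathcal{K}_1$. The paper's proof is terser---it simply asserts the super-solution property (citing \cite[Subsection~3.2]{Jiang2019}) and invokes the comparison $M_i e^{-\alpha_i x/D_i} - \tilde u_i \in \mathcal{K}_1$---but your route~(i) is precisely this argument spelled out, and your route~(ii) is a self-contained alternative that avoids the external reference.
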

\begin{proof}
Indeed, with such a choice of $M_i$, the function $M_ie^{-\alpha_i x/ D_i}$ will then qualify as an super-solution for the single species equation for species $i$, in the sense of~\cite[Subsection 3.2]{Jiang2019}. Hence, by comparison, we have
$$
M_i e^{-\alpha_i x/D_1} - \tilde{u}_i \in \mathcal{K}_1, 
$$
that is, 
$$
\int_0^x \tilde{u}_i(y,t)\,dy \leq \int_0^x M_ie^{-\alpha_i y/ D_i}\,dy = \frac{M_i D_i}{\alpha_i} (1-e^{-\alpha_i x/D_i}) \quad \text{ for }x \in [0,L].
$$
This completes the proof.
\end{proof}
By the above discussion, a sufficient condition for \eqref{eq:suffcoexist'} is
\begin{multline}
    \int_0^L e^{\alpha_1 x/D_1}  g_1\left(\int_{400}^{700} a_1(\lambda) k_1(\lambda) I_{\rm in}(\lambda,t) e^{ -K_{BG}(\lambda)x- k_2(\lambda)  \frac{M_2 D_2}{\alpha_2} (1-e^{-\alpha_2 x/D_2})}\right)\,dx \\
 \qquad > \int_0^L e^{\alpha_1 x/D_1}   d_1(x) \,dx. \label{eq:suffcoexist''}  
\end{multline} 

Furthermore, an upper bound, $M_1$, for $\int_0^x\tilde{u}_1(y,t)\,dy$ is easily established following the arguments in Lemma \ref{lem:upperboundsM2}. Thus, a sufficient condition for \eqref{eq:suffcoexist} is given by \eqref{eq:suffcoexist''} and 
\begin{multline}
\int_0^L e^{\alpha_2 x/D_2}  g_2\left(\int_{400}^{700} a_2(\lambda) k_2(\lambda) I_{\rm in}(\lambda) e^{ -K_{BG}(\lambda)x- k_1(\lambda)  \frac{M_1 D_1}{\alpha_1} (1-e^{-\alpha_1 x/D_1})}\right)\,dx \\
\qquad > \int_0^L e^{\alpha_2 x/D_2}   d_2(x) \,dx. \label{eq:suffcoexistu}  
 \end{multline} 
This yields an explicit sufficient condition for coexistence.

\section{Extreme cases of niche differentiation: competitive outcomes}\label{sec:Mathniche}

In this section, we explicitly consider niche differentiation via the absorption spectra, $k_1(\lambda)$ and $k_2(\lambda)$. We consider the extreme cases of differentiation, where the niches either completely overlap, or do not overlap at all. Sufficient conditions for exclusion or coexistence are given. 

We establish the following definition to serve as a proxy for niche differentiation.

\begin{definition}

\begin{equation}\label{def:I_S}
    \mathcal{I}_S(k_1,k_2)= \frac{\|k_1-k_2\|_{L^1}}{\|k_1\|_{L^1} + \|k_2\|_{L^1}}. 
\end{equation}

\end{definition}
We refer to $\mathcal{I}_S(k_1,k_2)$ as the index of spectrum differentiation among two species. If the two species have the same absorption spectra then $\mathcal{I}_S(k_1,k_2)=0$ whereas if their absorption spectra are completely non-overlapping then $\mathcal{I}_S(k_1,k_2)=1$.

\subsection{Coexistence for disjoint niches}\label{subsect:exclude}
 Consider the case where the absorption spectra are completely non-overlapping, so that
competition for light is at the extreme minimum. Namely,  $$\mathcal{I}_S(k_1(\lambda),k_2(\lambda))=1.$$ 
We give a coexistence result.
\begin{corollary}\label{lem:disjoint}
Suppose {\bf(P)} holds, so that the exclusion equilibria $E_1$ and $E_2$ exist. If, in addition, $\mathcal{I}_S(k_1(\lambda),k_2(\lambda))=1$, then the coexistence results of Proposition \ref{prop:suffcoex} hold.
\end{corollary}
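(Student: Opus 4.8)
The plan is to translate the condition $\mathcal{I}_S(k_1,k_2)=1$ into a statement about the supports of the two absorption spectra, and then show that disjointness of supports forces the two invasion eigenvalues $\mu_u$ and $\mu_v$ to coincide exactly with the single-species eigenvalues already controlled by hypothesis \textbf{(P)}. Since $k_1,k_2\geq 0$, one always has $|k_1-k_2|\leq k_1+k_2$ pointwise, hence $\|k_1-k_2\|_{L^1}\leq \|k_1\|_{L^1}+\|k_2\|_{L^1}$, with equality precisely when $|k_1(\lambda)-k_2(\lambda)|=k_1(\lambda)+k_2(\lambda)$ for a.e.\ $\lambda$. That equality holds if and only if $k_1(\lambda)\,k_2(\lambda)=0$ for a.e.\ $\lambda$; so $\mathcal{I}_S(k_1,k_2)=1$ is equivalent to $k_1$ and $k_2$ having (essentially) disjoint supports.

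First I would record the consequence of disjoint supports for the functions $f_i$ defined in \eqref{eq:f}. Consider $f_1(x,0,p_2)$: the argument $p_2$ enters only through the factor $\exp[-k_2(\lambda)p_2]$ inside the $\lambda$-integral, and that factor is multiplied by $k_1(\lambda)$. On the support of $k_1$ we have $k_2\equiv 0$, so $\exp[-k_2(\lambda)p_2]=1$ wherever the integrand is nonzero. Hence $f_1(x,0,p_2)=f_1(x,0,0)$ for every $p_2\geq 0$ and every $x\in[0,L]$. By the symmetric argument, $f_2(x,p_1,0)=f_2(x,0,0)$ for all $p_1\geq 0$.

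Next I would substitute these identities into the invasion eigenvalues. Evaluating \eqref{eq:muv} with the competitor biomass in the appropriate slot, $p_2=\int_0^x \tilde{u}_2(y)\,dy$, gives $\mu_v=\mu(D_1,\alpha_1,f_1(\cdot,0,0))$, which is exactly the eigenvalue appearing in \textbf{(P)} for $i=1$; hence $\mu_v<0$. Likewise \eqref{eq:muu} with $p_1=\int_0^x \tilde{u}_1(y)\,dy$ yields $\mu_u=\mu(D_2,\alpha_2,f_2(\cdot,0,0))<0$ by \textbf{(P)} for $i=2$. Thus both $E_1$ and $E_2$ are linearly unstable, which is precisely the hypothesis \eqref{eq:suffcoexist} of Proposition \ref{prop:suffcoex}; conclusions (i) and (ii) of that proposition then follow immediately.

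The argument is essentially a bookkeeping exercise, so I do not anticipate a genuine obstacle; the one point requiring care is the correct identification of which biomass argument ($p_1$ versus $p_2$) carries the competitor's self-shading in each invasion eigenvalue, so that the cancellation $\exp[-k_j(\lambda)p_j]=1$ is applied against the right species' spectrum. It is worth emphasizing that because this cancellation is an \emph{exact} identity rather than an inequality, I do not even need the monotone dependence of $\mu$ on its potential $h$: the two invasion eigenvalues are equal to, not merely bounded by, the single-species eigenvalues furnished by \textbf{(P)}, which makes the conclusion sharp.
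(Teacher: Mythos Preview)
Your proposal is correct and follows essentially the same route as the paper: you translate $\mathcal{I}_S(k_1,k_2)=1$ into $k_1(\lambda)k_2(\lambda)=0$, deduce that $f_2(x,\int_0^x\tilde u_1,0)=f_2(x,0,0)$ and $f_1(x,0,\int_0^x\tilde u_2)=f_1(x,0,0)$, and conclude $\mu_u,\mu_v<0$ directly from \textbf{(P)} before invoking Proposition~\ref{prop:suffcoex}. Your added justification of the equivalence $\mathcal{I}_S=1\Leftrightarrow k_1k_2\equiv 0$ and your care with the slot placement of the competitor biomass are welcome, since the paper is terse on the first point and (as you may have noticed) slightly inconsistent in its typesetting of \eqref{eq:muv} versus its later use.
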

\begin{proof}
First note that $\mathcal{I}_S(k_1(\lambda),k_2(\lambda))=1$ is equivalent to $k_1(\lambda)k_2(\lambda)=0$ for each $\lambda$. 
It suffices to observe that 
$$
f_2(x,\int_0^x \tilde{u}_1(y)\,dy,0) = f_2(x,0,0),\quad \text{ and }\quad 
f_1(x,0,\int_0^x \tilde{u}_2(y)\,dy) = f_1(x,0,0)
$$
so that {\bf(P)} implies $\mu_u <0$ and $\mu_v<0$. The rest follows from Proposition \ref{prop:suffcoex}.
\end{proof}

\subsection{Competitive exclusion for identical niches}\label{sec:sameniche}
Next, we consider the case where the absorption spectra overlap completely ($\mathcal{I}_S(k_1,k_2)=0$) to consider maximum competition for light. Recall our assumption that $a_1(\lambda)=a_2(\lambda)=1$. Thus, Under these assumptions we establish the competitive exclusion scenarios in the following theorems.

\begin{theorem}{~\cite[Theorem 2.2]{Jiang2019}}
Assume $\mathcal{I}_S(k_1,k_2)=0$. Let $D_1 = D_2$, $\alpha_1 < \alpha_2$, $f_1 = f_2$, $d_1=d_2$. If ${\bf (P)}$ holds $($i.e. both $E_1,E_2$ exist$)$, then species $1$ drives the second species to extinction, regardless of initial condition.
\label{thm:4.3}
\end{theorem}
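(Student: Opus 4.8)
The plan is to reduce the global assertion to the two sign conditions $\mu_u>0>\mu_v$ on the invasion eigenvalues of Proposition~\ref{prop:4.5}, and then to obtain both signs from a single monotonicity property of the principal eigenvalue $\mu(D,\alpha,h)$ in the advection rate $\alpha$.

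First I would exploit the symmetry of the hypotheses. Since $\mathcal{I}_S(k_1,k_2)=0$ forces $k_1=k_2=:k$, while $D_1=D_2=:D$, $f_1=f_2=:f$ and $d_1=d_2$, both species obey the same growth response $f$, and moreover $f(x,p_1,p_2)$ depends on $(p_1,p_2)$ only through $p_1+p_2$; write $f(x,p_1,p_2)=\hat f(x,p_1+p_2)$. Setting $h_i(x):=\hat f(x,\int_0^x\tilde u_i\,dy)$, the equation defining the semi-trivial state $\tilde u_i$ says exactly that $\mu(D,\alpha_i,h_i)=0$, with positive principal eigenfunction $\tilde u_i$. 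By {\bf(H)} each $h_i$ is strictly decreasing in $x$, since $\frac{d}{dx}h_i=\partial_x\hat f+(\partial_P\hat f)\,\tilde u_i\le\partial_x\hat f<0$. Because $\mu_u=\mu(D,\alpha_2,h_1)$ and $\mu_v=\mu(D,\alpha_1,h_2)$, the theorem reduces to showing that $\alpha\mapsto\mu(D,\alpha,h)$ is strictly increasing whenever $h$ is strictly decreasing; this gives $\mu_v=\mu(D,\alpha_1,h_2)<\mu(D,\alpha_2,h_2)=0$ and $\mu_u=\mu(D,\alpha_2,h_1)>\mu(D,\alpha_1,h_1)=0$, as $\alpha_1<\alpha_2$.

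To prove this monotonicity I would pass to the self-adjoint gauge $w=e^{-\alpha x/D}\phi$, under which \eqref{eq:simpleeigenvalueproblem} becomes $Dw''+\alpha w'+(h+\mu)w=0$ with Neumann conditions $w'(0)=w'(L)=0$, and $\mu(D,\alpha,h)=\min_w\big(\int_0^L[D(w')^2-hw^2]e^{\alpha x/D}\,dx\big)\big/\big(\int_0^L w^2 e^{\alpha x/D}\,dx\big)$. The first key step is that the positive principal eigenfunction $w_\alpha$ is strictly decreasing: differentiating its equation, $p:=w_\alpha'$ solves $Dp''+\alpha p'+(h+\mu)p=-h'\,w_\alpha>0$ on $(0,L)$ with $p(0)=p(L)=0$, and since the Neumann principal eigenvalue of this operator is $0$ its Dirichlet principal eigenvalue is positive, so the maximum principle forces $p<0$ on $(0,L)$. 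The second step is a first-variation (Feynman--Hellmann) computation: normalizing $\int_0^L w_\alpha^2 e^{\alpha x/D}\,dx=1$ and using the identity $[D(w_\alpha')^2-(h+\mu)w_\alpha^2]e^{\alpha x/D}=\big(Dw_\alpha w_\alpha' e^{\alpha x/D}\big)'$ together with $w_\alpha'(0)=w_\alpha'(L)=0$, one gets $\frac{d}{d\alpha}\mu(D,\alpha,h)=-\int_0^L w_\alpha w_\alpha'\,e^{\alpha x/D}\,dx$, which is strictly positive because $w_\alpha>0$ and $w_\alpha'<0$. This yields $\mu_u>0>\mu_v$.

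It remains to upgrade this to global convergence. The semiflow of \eqref{eq:full} is strongly monotone with respect to the cone $\mathcal{K}$ (as recorded before Proposition~\ref{prop:4.5}), and by Proposition~\ref{prop:4.5} the inequalities $\mu_u>0$, $\mu_v<0$ say precisely that $E_1$ is linearly stable and $E_2$ linearly unstable. For a strongly monotone competitive system whose only semi-trivial equilibria are an unstable $E_2$ and a stable $E_1$, the theory of monotone dynamical systems \cite{Smith,Hess} then gives that $E_1$ attracts every solution with $u_{1,0}\not\equiv 0$, so species $2$ is excluded. I expect two steps to demand the most care: the maximum-principle argument for $w_\alpha'<0$, which rests on the strict ordering of the Dirichlet and Neumann principal eigenvalues and on $h\in C^1$ (guaranteed here since $\tilde u_i$ is continuous), and the final passage from linear (in)stability to global attraction, which cannot come from the linearization alone but must be drawn from the strong monotonicity of the flow and the nonexistence of an intermediate coexistence state.
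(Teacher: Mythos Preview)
Your approach is essentially the paper's: both arguments reduce everything to the strict monotonicity $\alpha\mapsto\mu(D,\alpha,h)$ for strictly decreasing $h$, which is exactly Lemma~\ref{lem:4.3}(a). The paper quotes this from \cite{Jiang2019}; your self-contained Feynman--Hellmann derivation (the identity $[D(w')^2-(h+\mu)w^2]e^{\alpha x/D}=(Dww'e^{\alpha x/D})'$ together with $w_\alpha'<0$) is correct and is a clean addition.

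There is one genuine gap in the global step. Knowing $\mu_u>0$ and $\mu_v<0$ (i.e.\ $E_1$ linearly stable, $E_2$ linearly unstable) does \emph{not} by itself force $E_1$ to be the global attractor in a strongly monotone competitive system: a stable $E_1$ and unstable $E_2$ are perfectly compatible with one or more interior equilibria and hence with bistability. The paper does not try to use stability of $E_1$ at all; instead its Step~2 shows directly that no positive equilibrium exists, and then invokes the Hsu--Smith--Waltman trichotomy (\cite{Hsu1996competitive}, \cite{Lam2016remark}): with no coexistence state and $E_2$ unstable, $E_1$ must attract every positive orbit. You flag ``nonexistence of an intermediate coexistence state'' as needed but do not supply it.

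The fix is immediate from what you have already proved. If $(u_1^*,u_2^*)$ were a positive equilibrium, then writing $h^*(x)=\hat f\!\left(x,\int_0^x(u_1^*+u_2^*)\right)$ (using your observation that $f(x,p_1,p_2)=\hat f(x,p_1+p_2)$ when $k_1=k_2$), each $u_i^*>0$ is a principal eigenfunction, so $\mu(D,\alpha_1,h^*)=\mu(D,\alpha_2,h^*)=0$ with $\alpha_1<\alpha_2$, contradicting your strict monotonicity in $\alpha$. Insert this one-line Step~2 and drop the appeal to $\mu_u>0$; the argument then matches the paper's and is complete.
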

\begin{proof}
By the theory of monotone dynamical systems (see, e.g.~\cite[Theorem B]{Hsu1996competitive} and~\cite[Theorem 1.3]{Lam2016remark}), it suffices to establish the linear instability of the exclusion equilibria $E_2$, and the non-existence of positive equilibria.

\noindent {\bf Step 1.} We claim that $\mu_v <0$, i.e. $E_2=(0,\tilde{u}_2)$ is linearly unstable.

Recall that $\tilde{u}_2$ is the unique positive solution to
$$
\begin{cases}
D_2 \tilde u_{xx} - \alpha_2 \tilde u_x + f_2(x,0, \int_0^x \tilde u(y)\,dy) \tilde u = 0 &\text{ in }[0,L],\\
D_2 \tilde{u}_x - \alpha_2 \tilde u = 0 &\text{ for }x=0,L,
\end{cases}
$$
where $f_2$ is given in \eqref{eq:f} and satisfies {\bf(H)}. Since $\tilde{u}_2$ can be regarded as a positive eigenfunction, we deduce that $\mu(D_2,\alpha_2,f_2(x,0,\int_0^x \tilde{u}_2(y)\,dy)) =0$. 

Since $D_1=D_2$, $\alpha_1<\alpha_2$ and $f_1=f_2$, we may apply Lemma \ref{lem:4.3}(a), found in the Appendix, to get
$$
\mu_v=\mu(D_1,\alpha_1,f_1(x,0,\int_0^x \tilde{u}_2(y)\,dy)) <\mu(D_2,\alpha_2,f_2(x,0,\int_0^x \tilde{u}_2(y)\,dy)) = 0. 
$$
Thus $E_2$ is linearly unstable.

\noindent {\bf Step 2.} The system \eqref{eq:full} has no positive equilibrium.

Suppose to the contrary that $(u_1^*, v^2_*)$ is a positive equilibrium, then deduce that
$$
\mu(D_i,\alpha_i,f_i(x,\int_0^x u^*_1(y)\,dy, \int_0^x u^*_2(y)\,dy))=0 \quad \text{ for }i=1,2,
$$
where the respective eigenfunctions are given by $u^*_i>0$. However, this is in contradiction with Lemma \ref{lem:4.3}(a).
\end{proof}

\begin{theorem}{~\cite[Theorem 2.3]{Jiang2019}}
Assume $\mathcal{I}_S(k_1,k_2)=0$. Let $D_1 < D_2$, $\alpha_1 = \alpha_2 \geq [f_1(L,0,0) - d_1]L$, $f_1 = f_2$, $d_1=d_2$. If ${\bf (P)}$ holds $($i.e. both $E_1,E_2$ exist$)$, then the faster species , species $2$ drives the slower species, species $1$, to extinction, regardless of initial condition.
\label{thm:4.4}
\end{theorem}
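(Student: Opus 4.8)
The plan is to follow the same two-step template as the proof of Theorem~\ref{thm:4.3}, invoking the theory of strongly monotone semiflows (via \cite[Theorem B]{Hsu1996competitive} and \cite[Theorem 1.3]{Lam2016remark}): once one semi-trivial equilibrium is shown to be linearly unstable and no positive equilibrium exists, the other semi-trivial equilibrium attracts every nontrivial trajectory. Since we expect the faster diffuser (species~2) to win, the target is to show that $E_1=(\tilde u_1,0)$ is unstable, i.e. $\mu_u<0$ in the sense of \eqref{eq:muu}, and that \eqref{eq:full} admits no coexistence state. The essential difference from Theorem~\ref{thm:4.3} is that there one exploits monotonicity of $\mu(D,\alpha,h)$ in the advection $\alpha$ with $D$ fixed, whereas here $\alpha_1=\alpha_2$ and instead we must exploit \emph{strict monotonicity of $\mu(D,\alpha,h)$ in the diffusion rate $D$}, supplied by Lemma~\ref{lem:4.3}; the advection hypothesis $\alpha_1\ge [f_1(L,0,0)-d_1]L$ is precisely what fixes the \emph{direction} of this monotonicity.

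\textbf{Step 1 (instability of $E_1$).} The monoculture density $\tilde u_1$ solves $D_1\tilde u_1''-\alpha_1\tilde u_1'+f_1(x,\int_0^x\tilde u_1,0)\,\tilde u_1=0$ with the zero-flux boundary condition, so $\tilde u_1>0$ is itself a principal eigenfunction of \eqref{eq:simpleeigenvalueproblem} with eigenvalue $0$; that is, $\mu(D_1,\alpha_1,h)=0$, where $h(x):=f_1(x,\int_0^x\tilde u_1(y)\,dy,0)$. Because $\mathcal I_S(k_1,k_2)=0$ forces $k_1\equiv k_2$, the profiles $f_1,f_2$ coincide and depend on $(p_1,p_2)$ only through the sum $p_1+p_2$; together with $\alpha_1=\alpha_2$ this gives $\mu_u=\mu(D_2,\alpha_2,f_2(x,\int_0^x\tilde u_1,0))=\mu(D_2,\alpha_1,h)$. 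Applying the $D$-monotonicity of Lemma~\ref{lem:4.3} with $D_2>D_1$ then yields $\mu_u=\mu(D_2,\alpha_1,h)<\mu(D_1,\alpha_1,h)=0$, so by Proposition~\ref{prop:4.5} the equilibrium $E_1$ is linearly unstable.

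\textbf{Step 2 (no coexistence equilibrium).} Suppose for contradiction that $(u_1^\ast,u_2^\ast)$ is a positive equilibrium. Each $u_i^\ast>0$ is a positive solution of the $i$-th stationary equation, hence a positive principal eigenfunction, which forces $\mu(D_i,\alpha_i,f_i(x,\int_0^x u_1^\ast,\int_0^x u_2^\ast))=0$ for $i=1,2$. As in Step~1, identical niches and $\alpha_1=\alpha_2$ collapse both expressions to the \emph{same} potential $h^\ast(x):=f(x,\int_0^x(u_1^\ast+u_2^\ast))$ and the same advection, so $\mu(D_1,\alpha_1,h^\ast)=\mu(D_2,\alpha_1,h^\ast)=0$. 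But $D_1\ne D_2$, contradicting the \emph{strict} monotonicity of $\mu$ in $D$. Hence no positive equilibrium exists, and the two steps together give global extinction of species~1 by the monotone-systems dichotomy.

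The main obstacle is Lemma~\ref{lem:4.3} itself: unlike monotonicity in $\alpha$, the dependence of the principal eigenvalue on $D$ is \emph{not} sign-definite in general---for a decreasing growth profile $h$ with no advection the classical ``slower diffuser wins'' effect makes $\mu$ \emph{increasing} in $D$---so the real content is to show the advection hypothesis reverses this sign. A clean way to see the mechanism is the gauge transformation $\phi=e^{\alpha x/(2D)}\eta$, which removes the drift and turns \eqref{eq:simpleeigenvalueproblem} into the self-adjoint problem $D\eta''+(h+\mu-\tfrac{\alpha^2}{4D})\eta=0$ with Robin condition $D\eta'-\tfrac\alpha2\eta=0$ at $x=0,L$; writing $\mu(D)=\nu(D)+\tfrac{\alpha^2}{4D}$ with $\nu(D)$ the principal eigenvalue of the symmetrized problem, the explicit term $\tfrac{\alpha^2}{4D}$ is strictly decreasing in $D$, and the role of the lower bound on $\alpha_1$ is to guarantee that this mechanical-mixing term dominates the residual variation of $\nu(D)$, so that $\mu$ is strictly decreasing in $D$ on the relevant range. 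Controlling $\nu'(D)$ through the $D$-dependent weight and boundary terms is the delicate estimate that the appendix lemma must carry out.
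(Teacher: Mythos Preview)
Your two-step template matches the paper's proof exactly: show $E_1$ is linearly unstable, rule out positive equilibria, and conclude via the monotone-systems dichotomy. The one substantive slip is in how you invoke Lemma~\ref{lem:4.3}. You appeal to ``the $D$-monotonicity of Lemma~\ref{lem:4.3}'' in Step~1 and to ``strict monotonicity of $\mu$ in $D$'' in Step~2, but part~(c) of that lemma does \emph{not} assert that $D\mapsto\mu(D,\alpha,h)$ is monotone. It only says that $\partial_D\mu<0$ \emph{at any $D_0$ where $\mu(D_0,\alpha,h)=0$}, provided $\alpha\ge h(0)L$. You yourself note in the final paragraph that the $D$-dependence is not globally sign-definite, so the inference ``$D_2>D_1$ hence $\mu(D_2,\alpha,h)<\mu(D_1,\alpha,h)$'' is not justified as written.

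The paper closes this with a short intermediate-value argument that you should add. Set $H(D):=\mu(D,\alpha,h)$ with $h(x)=f(x,\int_0^x\tilde u_1,0)$; then $H(D_1)=0$ and, by Lemma~\ref{lem:4.3}(c), $H'(D_1)<0$. If $\mu_u=H(D_2)\ge 0$, then $H$ is negative immediately to the right of $D_1$ but nonnegative at $D_2$, so there exists $D_3\in(D_1,D_2]$ with $H(D_3)=0$ and $H'(D_3)\ge 0$, contradicting Lemma~\ref{lem:4.3}(c) applied at $D_3$. Hence $\mu_u<0$. The same device handles Step~2: from $\mu(D_1,\alpha,h^\ast)=\mu(D_2,\alpha,h^\ast)=0$ with $h^\ast(0)=f(0,0,0)=h(0)$, the argument starting at $D_1$ again manufactures a zero of $H$ with nonnegative $D$-derivative, which is impossible. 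Your gauge-transformation heuristic is nice intuition for why the advection lower bound enters, but once Lemma~\ref{lem:4.3}(c) is granted it is not needed for the proof.
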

\begin{proof}
Denote $\alpha = \alpha_1=\alpha_2$ and $f=f_1=f_2$.
By the theory of monotone dynamical systems (see, e.g.~\cite[Theorem B]{Hsu1996competitive} and~\cite[Theorem 1.3]{Lam2016remark}), it suffices to establish the linear instability of the exclusion equilibria $E_2$, and the non-existence of positive equilibria.

\noindent {\bf Step 1.} We claim that $\mu_u <0$, i.e. $E_1=(\tilde{u}_1)$ is linearly unstable.

Recall that $\tilde{u}_1$ is the unique positive solution to
$$
\begin{cases}
D_1 \tilde u_{xx} - \alpha \tilde u_x + f(x,0, \int_0^x \tilde u(y)\,dy) \tilde u = 0 &\text{ in }[0,L],\\
D_1 \tilde{u}_x - \alpha \tilde u = 0 &\text{ for }x=0,L,
\end{cases}
$$
where $f=f_1=f_2$ is given in \eqref{eq:f} and satisfies {\bf(H)}. Since $\tilde{u}_1$ can be regarded as a positive eigenfunction, we deduce that $\mu(D_1,\alpha,f(x,0,\int_0^x \tilde{u}_1(y)\,dy)) =0$. 

Next, we claim that 
\begin{equation}
\mu_u=\mu(D_2,\alpha,f(x,0,\int_0^x \tilde{u}_1(y)\,dy)) < 0. 
\end{equation}
Suppose to the contrary that $H(D_2) \geq 0$, where
$$
H(D) := \mu(D,\alpha_1,f_1(x,0,\int_0^x \tilde{u}_2(y)\,dy)).
$$
Since $D_1<D_2$, $\alpha \geq [f(0,0,0)]L$, we have $H(D_1)=0$ and $H(D_2)\geq 0$. By Lemma \ref{lem:4.3}(c) (found in the Appendix), $H'(D_1)<0$, so that there exists $D_3 \in (D_1,D_2]$ such that $H(D_3)=0$ and $H'(D_3) \geq 0$. But this is impossible in view of Lemma \ref{lem:4.3}(c).
Thus $E_1$ is linearly unstable.

\noindent {\bf Step 2.} The system \eqref{eq:full} has no positive equilibrium.

Suppose to the contrary that $(u_1^*, u^*_2)$ is a positive equilibrium, then deduce that
$$
\mu(D_i,\alpha,f(x,\int_0^x u^*_1(y)\,dy, \int_0^x u^*_2(y)\,dy))=0 \quad \text{ for }i=1,2,
$$
where the respective eigenfunctions are given by $u^*_i>0$. However, we can argue as in Step 1 that this is in contradiction with Lemma \ref{lem:4.3}(c).
\end{proof}
\begin{theorem}{~\cite[Theorem 2.4]{Jiang2019}}
Assume $\mathcal{I}_S(k_1,k_2)=0$. Let $D_1 < D_2$, $\alpha_1 = \alpha_2 \geq 0$, $f_1 = f_2$, $d_1=d_2$. If ${\bf (P)}$ holds $($i.e. both $E_1,E_2$ exist$)$, then the slower species, species $1$ drives the faster species, species $2$, to extinction, regardless of initial condition.
\label{thm:4.5}
\end{theorem}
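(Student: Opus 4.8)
The plan is to follow the proof of Theorem~\ref{thm:4.4} almost verbatim, interchanging the roles of the two species, and to replace the large-advection input of Lemma~\ref{lem:4.3} by its small-advection counterpart. Write $\alpha=\alpha_1=\alpha_2$ and $f=f_1=f_2$. As for Theorem~\ref{thm:4.4}, the monotone dynamical systems framework (see \cite[Theorem B]{Hsu1996competitive} and \cite[Theorem 1.3]{Lam2016remark}) reduces the global conclusion to two local statements: the semi-trivial equilibrium $E_2=(0,\tilde u_2)$ is linearly unstable, and \eqref{eq:full} has no positive equilibrium. Together these force $E_1$ to attract every nontrivial orbit, i.e. species~$1$ excludes species~$2$. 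Note that this time we must destabilize $E_2$ rather than $E_1$, which is the formal manifestation of the reversed conclusion relative to Theorem~\ref{thm:4.4}.

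For the instability of $E_2$, I would set $H(D):=\mu(D,\alpha,f(x,0,\int_0^x \tilde u_2(y)\,dy))$. Since $\tilde u_2>0$ solves the single-species problem for species~$2$ at diffusion rate $D_2$, it is a positive principal eigenfunction, so $H(D_2)=0$; the quantity to be signed is $\mu_v=H(D_1)$, and because $D_1<D_2$ the goal is $H(D_1)<0$. The key input is the sign of the derivative of the principal eigenvalue in the diffusion rate, furnished by Lemma~\ref{lem:4.3}: under $\alpha\ge 0$, one has $H'(D)>0$ at every zero of $H$. Arguing by contradiction exactly as in Theorem~\ref{thm:4.4}, if $H(D_1)\ge 0$ then, since $H(D_2)=0$ and $H'(D_2)>0$ forces $H<0$ immediately to the left of $D_2$, there must be an intermediate zero $D_3\in[D_1,D_2)$ at which $H'(D_3)\le 0$, contradicting Lemma~\ref{lem:4.3}. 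Hence $\mu_v=H(D_1)<0$ and $E_2$ is unstable.

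For the non-existence of a positive equilibrium, suppose $(u_1^*,u_2^*)$ were one. Treating each $u_i^*>0$ as a positive eigenfunction gives $\mu(D_i,\alpha,h^*)=0$ for $i=1,2$, where $h^*=f(x,\int_0^x u_1^*(y)\,dy,\int_0^x u_2^*(y)\,dy)$ is common to both equations because $f_1=f_2$. Thus $D\mapsto\mu(D,\alpha,h^*)$ would vanish at the two distinct rates $D_1\ne D_2$; but the same sign-definiteness of the eigenvalue derivative at zeros (Lemma~\ref{lem:4.3}) makes this map strictly monotone through its zeros, so it can vanish at most once, a contradiction.

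The crux, and the only place where the hypothesis $\alpha\ge 0$ genuinely enters, is the sign of $\partial_D\mu$ at a zero of $H$: it is precisely opposite to the sign used in Theorem~\ref{thm:4.4}, and this reversal is what converts \emph{faster diffuser wins} into \emph{slower diffuser wins}. I therefore expect the main obstacle to lie entirely in Lemma~\ref{lem:4.3} --- that is, in differentiating the principal eigenvalue with respect to $D$ and controlling the interior and boundary contributions under the weighted no-flux condition so as to pin down its sign in the non-negative-advection regime --- while the surrounding monotone-dynamics bookkeeping transcribes directly from the proof of Theorem~\ref{thm:4.4}.
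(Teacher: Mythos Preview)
Your overall strategy---destabilize $E_2$ and rule out positive equilibria via the monotone-systems machinery---is correct, but the paper follows the \emph{simpler} template of Theorem~\ref{thm:4.3} rather than Theorem~\ref{thm:4.4}. The point is that Lemma~\ref{lem:4.3}(b) (reading $\partial\mu/\partial D$ for the evident misprint) gives $\partial_D\mu(D,\alpha,h)>0$ for \emph{all} $D>0$, not only at zeros of $\mu$. Hence $H(D)=\mu(D,\alpha,f(\cdot,0,\int_0^\cdot\tilde u_2))$ is globally strictly increasing, and from $H(D_2)=0$ together with $D_1<D_2$ one obtains $\mu_v=H(D_1)<0$ in one line; the same global monotonicity immediately forbids two distinct zeros of $D\mapsto\mu(D,\alpha,h^*)$ in Step~2. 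Your intermediate-zero contradiction borrowed from Theorem~\ref{thm:4.4} is not wrong, just unnecessary once (b) is available.

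There is, however, a genuine hypothesis mismatch you gloss over. Lemma~\ref{lem:4.3}(b) carries the assumption $\alpha\le 0$, while the theorem as stated has $\alpha_1=\alpha_2\ge 0$; your sentence ``under $\alpha\ge 0$, one has $H'(D)>0$ at every zero of $H$'' is not supported by any part of Lemma~\ref{lem:4.3}: part (b) needs $\alpha\le 0$, and part (c) delivers the \emph{opposite} sign under a large-advection hypothesis. Since the paper's own proof explicitly invokes Lemma~\ref{lem:4.3}(b), the intended hypothesis is almost certainly $\alpha\le 0$ (consistent with \cite[Theorem~2.4]{Jiang2019} and with the biological picture that buoyant species benefit from slower diffusion). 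You should flag this inconsistency in the statement rather than assert that the lemma covers the regime $\alpha\ge 0$.
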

\begin{proof}
The proof is the same as Theorem \ref{thm:4.3}, found in the Appendix, where we use Lemma \ref{lem:4.3}(b) instead of Lemma \ref{lem:4.3}(a).
\end{proof}
Note that $\mathcal{I}_S(k_1,k_2)=0$ is equivalent to $k_1(\lambda)=k_2(\lambda)$ for all $\lambda$. 
The above theorems can be summarized into a single sentence: Suppose both species consume light in the same efficiency, the species that remains at, or moves towards the water's surface at a higher rate will exclude the other species. That is, if both species are sinking either the one sinking slower, or with higher diffusion will exclude. If both species are buoyant then the less buoyant species or the more diffusive species will be excluded.

\section{Numerical investigation of niche differentiation}\label{sec:NumCoexistmechs} 
To complement the theorems established in Sections \ref{sec:mathresults} and \ref{sec:Mathniche} we present several numerical simulations that show the relatively large regions in parameter space that allow for coexistence. We numerically explore two main competition scenarios: 1) Niche differentiation through specialization of different wavelengths and 2) niche differentiation through specialist and generalist (with respect to light) competition. In each scenario we consider the intermediate levels of niche differentiation evaluated by $\mathcal{I}_S(k_1,k_2)$.

The main results of this section can be summarized by the following key points: 
\begin{itemize}
    \item[P1:] Competitive advantage is given to the species whose absorption spectrum overlaps the most with the available incident light. However, significant niche differentiation can promote coexistence for scenarios where incident light does not strongly favour a single species. 
    \item[P2:] Competitive exclusion through an advection advantage can be overcome by niche differentiation. 
    \item[P3] Intermediate values of specialization will promote coexistence. Otherwise, the specialist is excluded if its niche is too narrow, or excludes if its niche overlaps with the incident light significantly.  
\end{itemize}

\subsection{Competition outcomes for specialization on separate parts of the light spectrum}\label{sec:specvsspec}
Here we assume that the two species with relatively narrow niches are competing for light. We numerically show that through niche differentiation a species can resist competitive exclusion. These results imply that without the assumption of $\mathcal{I}_S(k_1,k_2)=0$ the theorems in Section \ref{subsect:exclude} do not hold and that when species' absorption spectra do not significantly overlap, coexistence is readily observed. 

To investigate the extent of which niche differentiation promotes coexistence we consider two scenarios. First, we let $k_1(\lambda)$ and $k_2(\lambda)$ be unimodal functions that are horizontal translations of each other. That is, let $g^*(\lambda)$ be a truncated Gaussian distribution on (-75,75) with mean zero and variance $\sigma$. Then $k_i(\lambda)=g^*(\lambda-\lambda_{i,0})$ where $\lambda_{i,0}\in[475,625]$ is the location of peak absorbance in the visible light spectrum. This ensures $k_1(\lambda)$ and $k_2(\lambda)$ have the same $L^1$ norm and are identical in their degree of specialization, giving no advantage through the absorption spectra alone.  We then allow the location of peaks of $k_2(\lambda)$ to vary along the light spectrum ($\lambda_{2,0}\in[475,625]$) while keeping $k_1(\lambda)$ fixed $\lambda_{1,0}=475$. By varying the location of the peak of $k_2(\lambda)$ we in-turn vary  $\mathcal{I}_S(k_1,k_2)$. Examples of this are shown graphically with the blue curves in Figure \ref{fig:explainspecvspec}. We also assume that the incident light $I_{in}(\lambda)$ is a unimodal function with the location of peak incidence at $\lambda=\lambda_{I}$. To understand the implications incident light has on coexistence we vary $\lambda_I$ in the range $[450,650]$). Two example curves for $I_{in}(\lambda)$ are shown in orange in Figure \ref{fig:explainspecvspec}. 

Second, we alter $\mathcal{I}_S(k_1,k_2)$ as above but with a uniform incident light function and allow a competitive advantage through advection by altering the advection rate $\alpha_2$ of species $2$. Recall that $u_1$ has competitive advantage when $\alpha_1<\alpha_2$, and species $2$ has competitive advantage when $\alpha_1>\alpha_2$ (see Theorem \ref{thm:4.3}).

By varying $\mathcal{I}_S(k_1,k_2)$ we can then explore the competitive outcomes for various scenarios where exclusion is known to occur when niche differentiation is not considered. Furthermore, we show that the incident light function $I_{in}(\lambda)$, together with the absorption spectra $k_1(\lambda),k_2(\lambda)$, play important roles in the competition outcome by allowing competitive advantages to be overcome, or diminished. Our results of this section are shown in Figure \ref{fig:specvsSpec} and \ref{fig:specvsspecAdvdist}.
\begin{figure}
\centering
\begin{subfigure}[b]{0.49\textwidth}
     \centering
    \includegraphics[width=\textwidth]{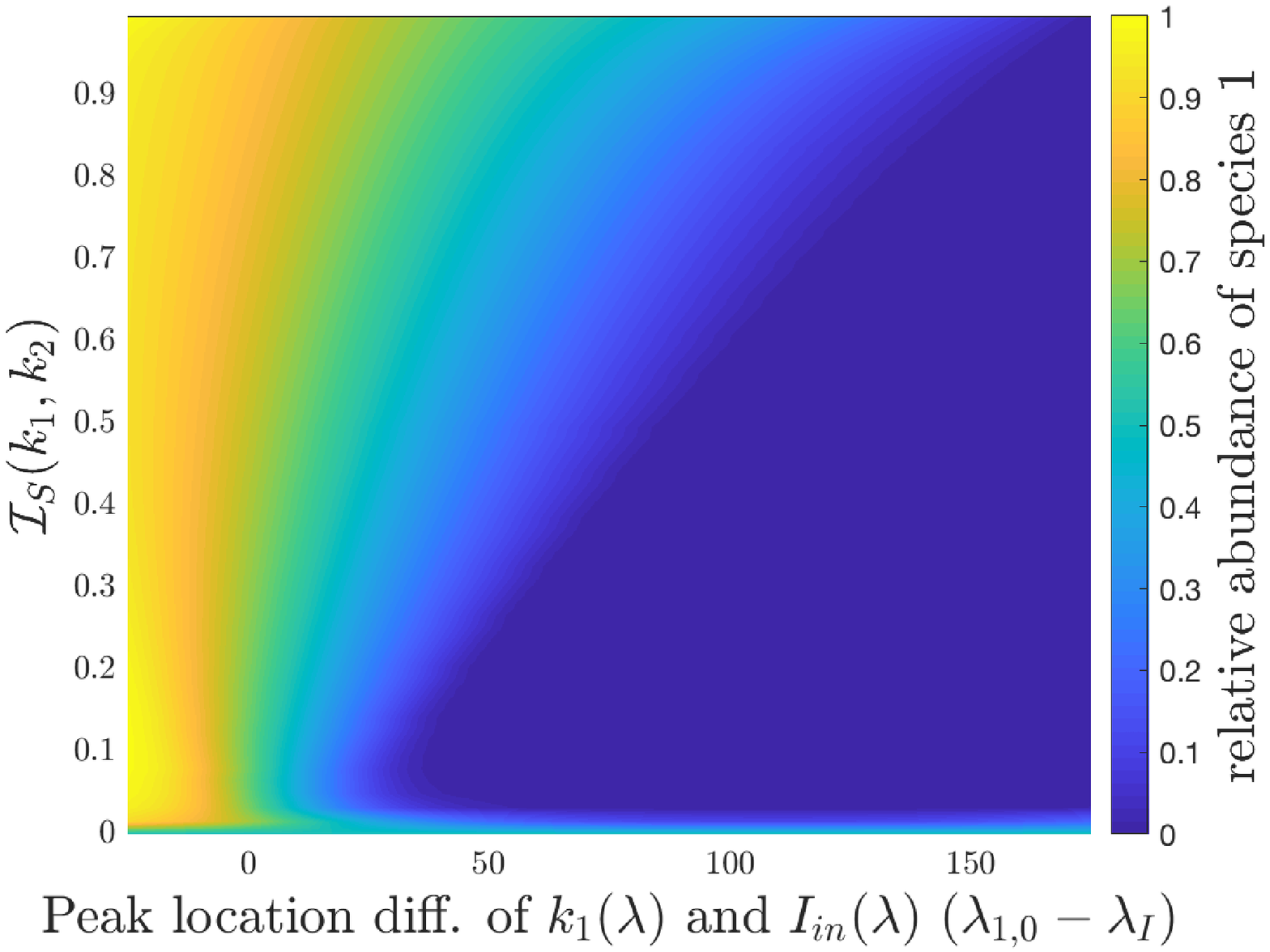}
    \caption{}
    \label{fig:specvsspecIin}
\end{subfigure}
\begin{subfigure}[b]{0.49\textwidth}
     \centering
    \includegraphics[width=\textwidth]{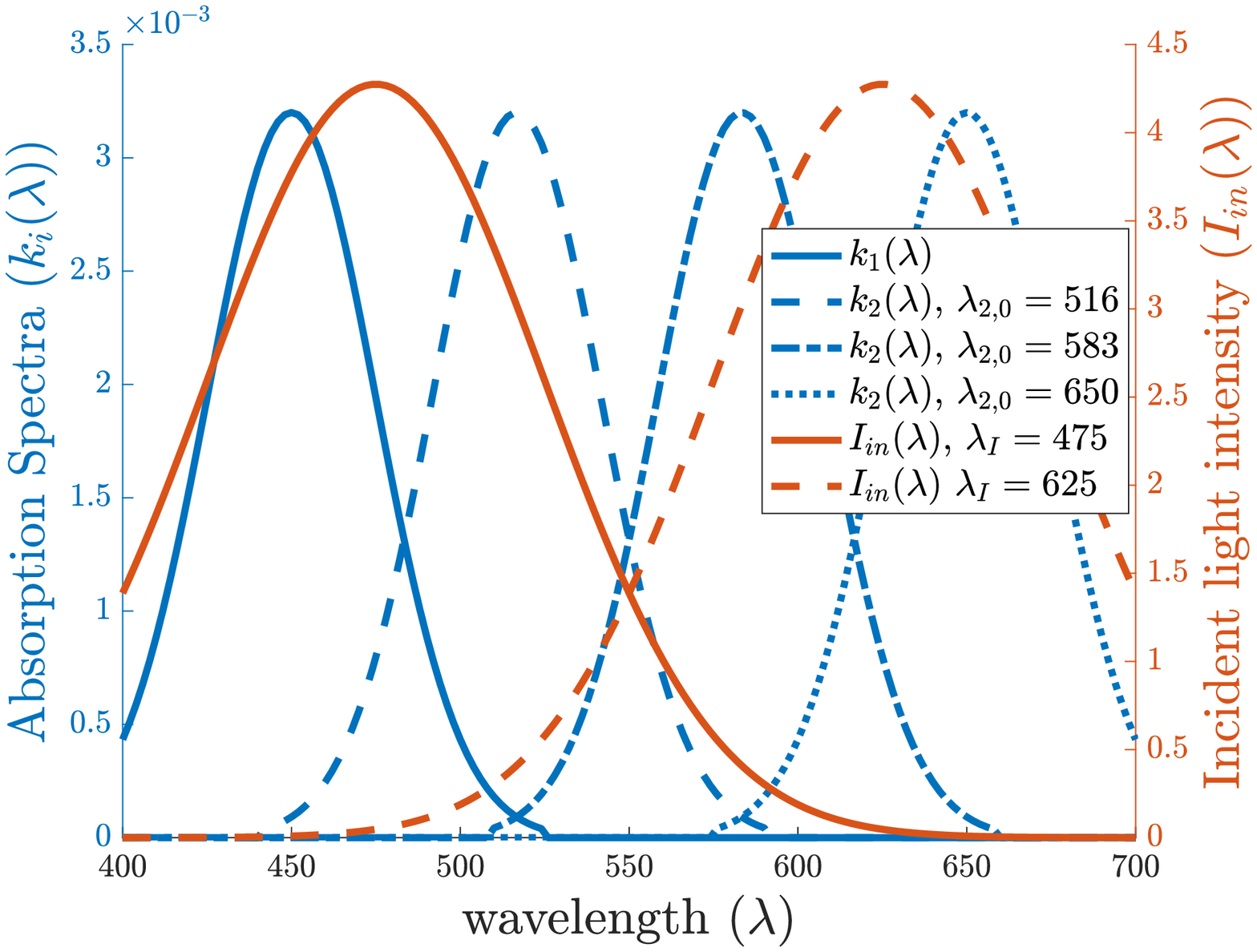}
    \caption{}
    \label{fig:explainspecvspec}
\end{subfigure}

 \caption{In (a) we show the competition outcome as the distance between the peak locations, $\lambda_{1,0} $ and $\lambda_I$, is changed versus the degree of niche differentiation between the two species (by varying $\lambda_{1,0}-\lambda_{2,0}$) The heat map is given by $\frac{|u_1|}{|u_1|+|u_2|}$. In (b) we show the shape of $k_i(\lambda)$ for four reference values of $\lambda_{i,0}$ in blue, and $I_{in}(\lambda)$ for two reference values of $\lambda_I$ in orange.}

\label{fig:specvsSpec}
\end{figure}
\begin{figure}
     \centering
    \includegraphics[width=.49\textwidth]{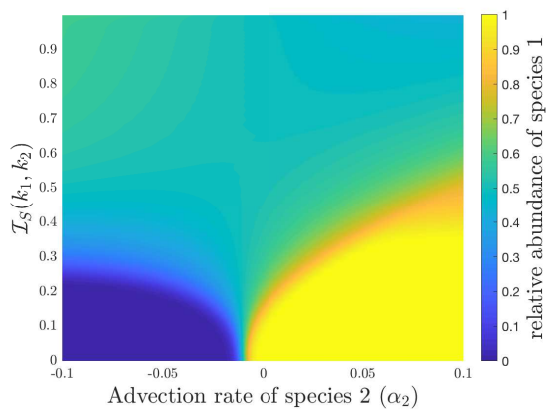}
    \caption{The competition outcome as the advection rate, $\alpha_2$, is changed versus the degree of niche differentiation between the two species under uniform incident light is shown. Example niches are given in blue in Figure \ref{fig:explainspecvspec}. The heat map is given by $\frac{|u_1|}{|u_1|+|u_2|}$. We fix $\alpha_1=-0.01$ mh$^{-1}$ and $I_{in}(\lambda)\equiv 1.67 \mu$mol$/(\text{m}^2
\cdot \text{s}\cdot \text{nm})$.}
    \label{fig:specvsspecAdvdist}
\end{figure}

Figure \ref{fig:specvsspecIin} shows the coexistence regions when varying the location of the peak of incident light and the distance between the two absorption spectra $k_1(\lambda)$ and $k_2(\lambda)$ (as measured by $\mathcal{I}_S(k_1,k_2)$). The point P1 is justified by the following observations in Figure \ref{fig:specvsspecIin}.
We see that exclusion is exhibited for extreme values of $\lambda_{1,0}-\lambda_I$ and non-zero $\mathcal{I}_S(k_1,k_2)$. When the values of $\lambda_{1,0}-\lambda_I$ are extreme, one of the species' absorption spectrum overlaps with the incident light significantly more giving it a competitive advantage. However, when the values of $\lambda_{1,0}-\lambda_I$ are intermediate and $\mathcal{I}_S(k_1,k_2)$ is large then each species has sufficient overlap with the incident light spectrum and any competitive advantage is diminished, promoting coexistence. 



 Figure \ref{fig:specvsspecAdvdist} shows the coexistence region when varying the advection rate of species 2 ($\alpha_2$) and the distance between the two absorption spectra $k_1(\lambda)$ and $k_2(\lambda)$ (given by $\mathcal{I}_S(k_1,k_2)$). 
 First, we observe that when $\mathcal{I}_S(k_1,k_2)=0$, whichever species that is more buoyant excludes the other species, as was established in Section \ref{sec:sameniche}.
 %
 However, when $\mathcal{I}_S(k_1,k_2)$ is large, the competitive exclusion caused by advection advantage is mitigated and coexistence occurs, thus justifying P2. When considering two species with unimodal absorption spectra, it is possible to overcome competitive exclusion by allowing for niche differentiation in the light spectrum.

\subsection{Outcomes for generalist versus specialist competition }
In this section we numerically explore niche differentiation in the light spectrum through competition between a specialist and a generalist. We say that a generalist species is a species whose absorption spectrum is uniform (or nearly uniform) across all visible wavelengths. Whereas we say a specialist species is one whose absorption spectrum is unimodal or narrow. In other words, a specialist absorbs a specific wavelength, or a small subset of wavelengths with a higher rate than other wavelengths. 

We explore the mechanism of specialist vs. generalist competition in overcoming competitive exclusion by explicitly comparing absorption spectra. We take $k_2(\lambda)$ to be constant (generalist) and choose $k_1(\lambda)$ such that $|k_1(\lambda)|=|k_2(\lambda)|$ in the $L^1$ norm. We further assume that $k_1(\lambda)$ is given by a truncated normal distribution, between 400 and 700 nm. By using the truncated normal distribution for $k_1(\lambda)$ we are able to change the degree of specialization of species 1 by changing the variance,$\sigma$, of the distribution as shown in Figure \ref{fig:explainspecvsgen}. Furthermore we allow the location of peak absorption to vary along the incident light spectra, that is $\lambda_{1,0}\in[400,700]$, where $\lambda_{1,0}$ is the mean of the truncated normal distribution and is the location of the local maximum of $k_1(\lambda)$. 

 We consider two scenarios to analyze the promotion of coexistence via the niche differentiation mechanism of specialist versus generalist competition. First, we assume an unimodal incident light $I_{in}(\lambda)$ as in Figure \ref{fig:specvsspecIin} and vary the location of the peak species absorption spectra, $\lambda_{1,0}$.
 Additionally, we vary the degree of specialization of species $1$ by changing the variance of the truncated normal distribution that defines its absorption spectrum. That is, by changing the variance we change the narrowness of its niche and thus change the values of $\mathcal{I}_S(k_1,k_2)$. 
 
 Second, we change $\mathcal{I}_S(k_1,k_2)$ as described above but with a uniform incident light function. We allow a competitive advantage through advection by altering the advection rate of species $2$, $\alpha_2$. Recall that $u_1$ has competitive advantage when $\alpha_1<\alpha_2$, and $u_2$ has competitive advantage when $\alpha_1>\alpha_2$ (see Theorem \ref{thm:4.3}).

 By varying $\mathcal{I}_S(k_1,k_2)$ we are able to show the competitive outcomes when niche differentiation via a specialist versus generalist competition is permitted. The results pertaining to competition outcomes of the scenarios discussed in this section are shown in Figures \ref{fig:2paramSpecivsGEn} and \ref{fig:specvsgenadv}.

\begin{figure}
     \centering
     \begin{subfigure}[b]{0.49\textwidth}
         \centering
         \includegraphics[width=1\textwidth]{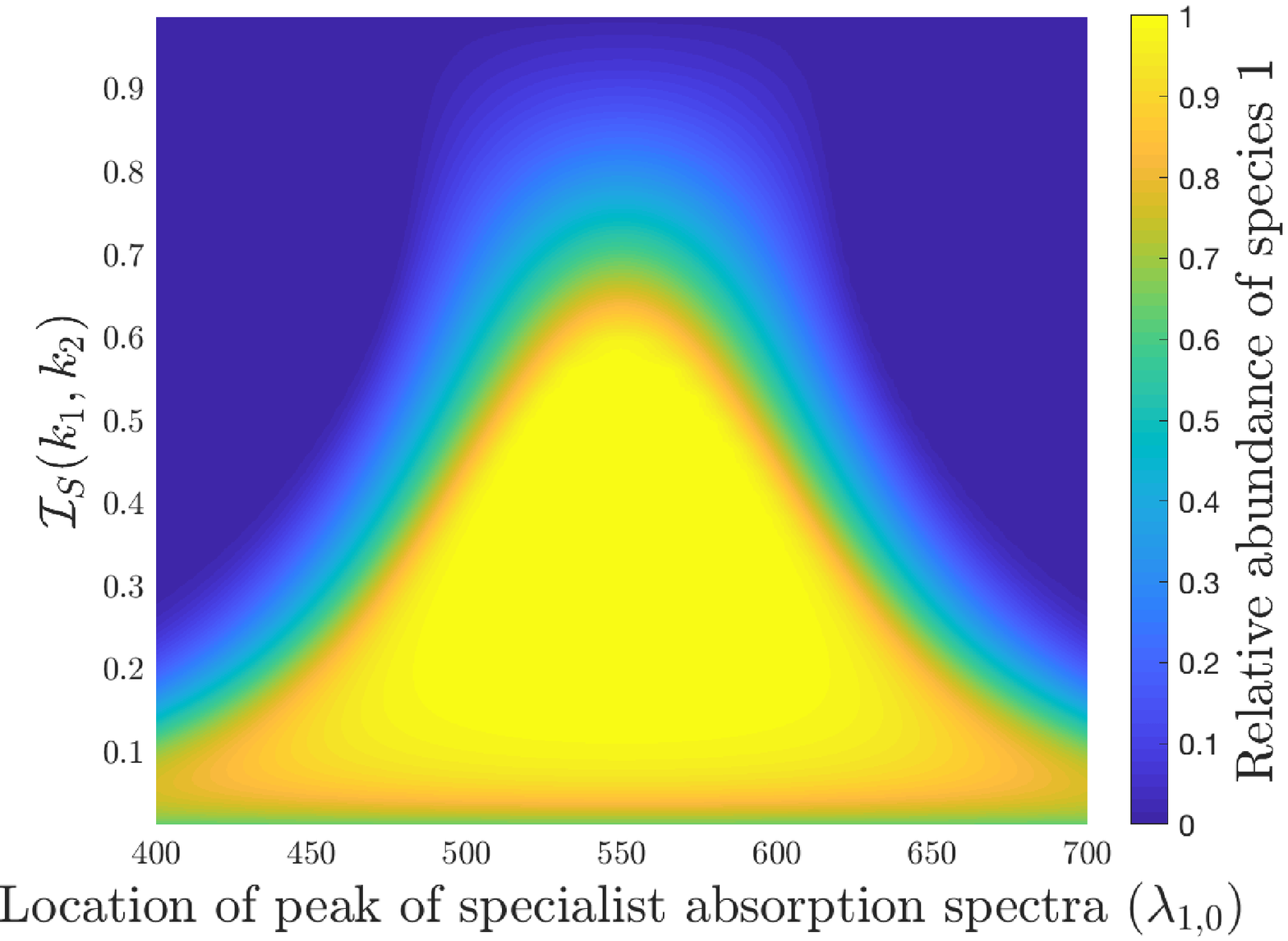}
         \caption{}
         \label{fig:specvsgenIin}
     \end{subfigure}
     \hfill
      \begin{subfigure}[b]{0.49\textwidth}
         \centering
         \includegraphics[width=1\textwidth]{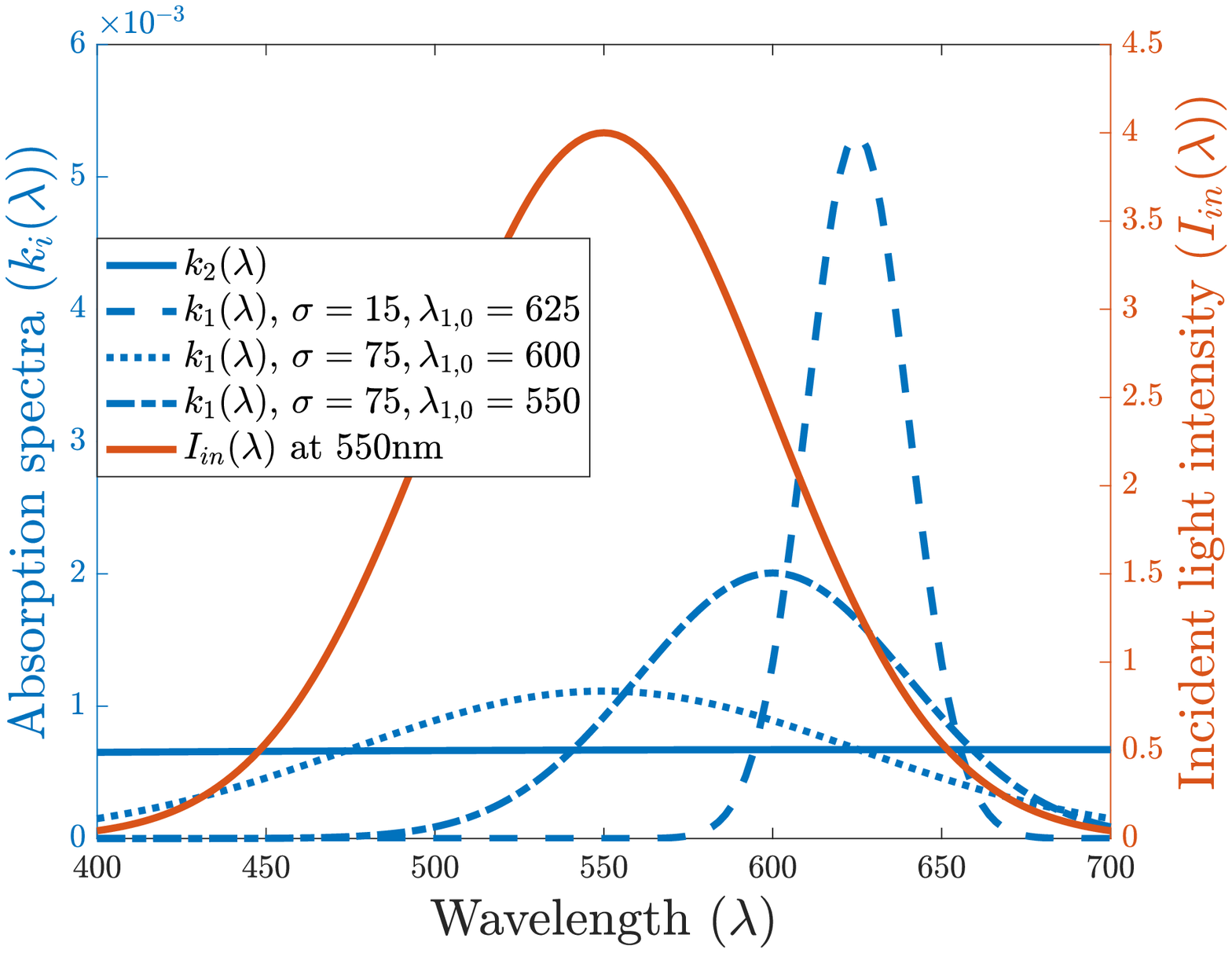}
         \caption{}
         \label{fig:explainspecvsgen}
     \end{subfigure}
        \caption{(a) shows coexistence regions for a specialist (species 1), and a generalist (species 2). The heat maps are given by $\frac{|u_1|}{|u_1|+|u_2|}$. In (b) we show samples of the absorption spectra in blue and the incident light in orange. We fix the absorption spectrum $k_2(\lambda$) of the generalist and change the specialization of species 1 by adjusting the variance of its absorption spectrum $k_1(\lambda)$ as shown by the blue lines in (b). We fix $I_{in}(\lambda)$ as given in (b) and show the competition outcome with relation to the distance between the specialists location of peak absorption and the incident lights location of peak intensity and the distance between the two absorption spectra given by $\mathcal{I}_S(k_1,k_2)$ in (a).}
        \label{fig:2paramSpecivsGEn}
\end{figure}

\begin{figure}
         \centering
         \includegraphics[width=.49\textwidth]{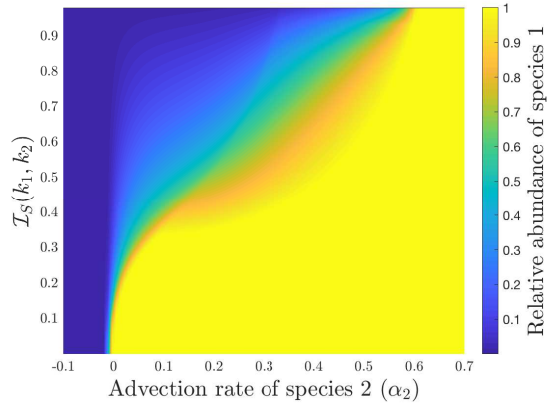}
         \caption{We show coexistence regions for competing specialist (species 1), and generalist (species 2). The heat map is given by $\frac{|u_1|}{|u_1|+|u_2|}$. We fix the absorption spectrum $k_2(\lambda$) of the generalist and adjust $\mathcal{I}_S(k_1,k_2)$ by changing the variance of $k_1(\lambda)$ while fixing the mean at $625$nm. We also vary the generalists' advection rate $\alpha_2$ from $-0.1$ to $0.7$mh$^{-1}$, while fixing the specialists' advection rate $\alpha_1=-0.01$  mh$^{-1}$.}
         \label{fig:specvsgenadv}
     \end{figure}

In Figure \ref{fig:specvsgenIin} we show the relative abundance of species 1 for various degrees of specialization and overlap with the incident light function. The point P3 is discussed in the following. Species 1 is a strong competitor for a narrow set of wavelengths, whereas species 2 is a weak competitor for a broad set of wavelengths. When species 1 is highly specialized ($\mathcal{I}_S(k_1,k_2)$ close to one) it strongly out-competes species 2 for a small portion of the light spectrum, however species 2 has little competition for the rest of the light spectrum and is able to exclude species 1. Furthermore, if species 1's niche does not overlap significantly with the incident light spectra or is too specialized then species 1 is faced with limited resource and is thus excluded. On the other hand, for intermediate specialization and relatively small distance between the location of peaks of $k_1(\lambda)$ and $I_{in}(\lambda)$ species 1 will out-compete species 2 for nearly all of the resources and thus excluding species 2. Coexistence is then observed when the specialist is relaxed to a generalist niche ($\mathcal{I}_S(k_1,k_2)$ is close to zero) because weak competition occurs along the entire light spectrum and no significant advantage is given. Additionally, for intermediate values of $\mathcal{I}_S(k_1,k_2)$ and sufficient overlap between incident light and species 1 niche, coexistence is permitted by the balance between the specialist strongly competing for a sufficient but narrow amount of resource and the generalist weakly competing for wide amount of resource that is not utilized by the specialist. 

In Figure \ref{fig:specvsgenadv} we show the relative abundance of species 1 for various degrees of specialization and advection rates of species 2 under uniform incident light. The point P2 is reiterated by the following results. Recall that in Theorem \ref{thm:4.3} we show that competitive exclusion occurs if one species has an advection advantage and there is no niche differentiation. Here we see that niche differentiation in the light spectrum ($\mathcal{I}_S(k_1,k_2)>0)$ allows for coexistence even though one species has a competitive advantage through advection. We note that if the generalist has an advection advantage then it will always exclude the specialist. On the other hand, if the specialist has the advection advantage it will exclude the generalist unless it becomes too specialized, in which case sufficient light is available for the generalist and either coexistence occurs, or in the case of extreme specialization, the specialist is excluded. Furthermore, there is a region where the competitive advantage of advection is so strong for the specialist that it will always exclude the generalist.         

 \section{Coexistence of N species}\label{sec:Nspecies}

In this section, we will show the possibility of coexistence of $N$ species, for any number $N\geq 1$. We numerically verify this result by considering competition among 5 species with varying advection rates. 
We introduce the $N$-species model analogous to  \eqref{eq:full}:

\begin{equation}\label{eq:fullN}
\begin{cases}
\partial_t u_i = D_i \partial^2_x u_i - \alpha_i \partial_x u_i + [g_i(\gamma_i(x,t)) - d_i(x)]u_i & \text{ for } 0 < x < L,~1\leq i\leq N,\\
D_i\partial_x u_i(x,t) - \alpha_i u_i(x,t) = 0 & \text{ for }x = 0,~L,\,t>0,~1\leq i\leq N,\\
u_i(x,0) = u_{i,0}(x)&\text{ for }0 < x < L,~1\leq i\leq N,
\end{cases}
\end{equation}
where $D_i>0$, $\alpha_i \in \mathbb{R}$ and $d_i$ are the diffusion rate,  buoyancy coefficient and death rate of the $i$-th species, respectively. The functions $g_i$ satisfies \eqref{eq:ggg}. 
The functions 
$\gamma_{i}(x,t)$ is the number of absorbed photons available for photosynthesis by the $i$-th species and is given by
\begin{equation}\label{eq:gammai}
\gamma_i(x,t) = \int_{400}^{700}k_i(\lambda) I(\lambda,x)\,d\lambda,
\end{equation}
where we have chosen $a_i \equiv 1$ as before, and
\begin{equation}\label{eq:II}
  I(\lambda,x) = I_{\rm in}(\lambda) \exp\left[- K_{BG}(\lambda)x - \sum_{i=1}^N k_i(\lambda)\int_0^x u_i(y,t)\,dy \right].
\end{equation}

\begin{theorem}
    Let the incident light spectrum $I_{in}(\lambda)$ be positive on an open set in $[400,700]$. Then for each $N \geq 1$, there exists a choice of $d_i$ and $\{k_i(\lambda)\}_{i=1}^N$ such that all $N$ species can persist in \eqref{eq:fullN}, i.e. for any positive initial condition, the solution $(u_i)_{i=1}^N$ of \eqref{eq:fullN} satisfies
    $$
    \liminf_{t\to\infty} \left[\inf_{0\leq x \leq L} u_i(x,t) \right]>0 \quad \text{ for each }1\leq i \leq N.
    $$
\end{theorem}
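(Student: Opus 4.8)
The plan is to reduce the $N$-species problem to $N$ \emph{decoupled} single-species problems by arranging for the absorption spectra $k_1,\dots,k_N$ to have pairwise disjoint supports. This is the many-species analogue of the disjoint-niche mechanism already exploited in Corollary~\ref{lem:disjoint}: when $k_i(\lambda)k_j(\lambda)\equiv 0$ for all $i\neq j$, the nonlocal self-shading term in \eqref{eq:II} that couples the species disappears from each individual growth integral $\gamma_i$, so that \eqref{eq:fullN} splits into $N$ independent single-species equations, to each of which the persistence theory of Section~\ref{sec:SingleSpecies} applies directly.

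To carry this out I would first use the hypothesis that $I_{\rm in}$ is positive on an open set. Such a set contains an open interval $(a,b)\subset[400,700]$ with $a<b$, and inside $(a,b)$ one can place $N$ pairwise disjoint closed subintervals $J_1,\dots,J_N$, each of positive length. For each $i$ choose $k_i\in C([400,700])$ to be a nonnegative bump supported in $J_i$ and strictly positive on its interior (for instance a truncated Gaussian as in Section~\ref{sec:specvsspec}). Because $J_i\subset(a,b)$ where $I_{\rm in}>0$, the overlap $\int_{400}^{700}k_i(\lambda)I_{\rm in}(\lambda)\,d\lambda$ is strictly positive, while disjointness of the $J_i$ gives $k_i(\lambda)k_j(\lambda)=0$ for $i\neq j$. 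The decoupling is then immediate: for $\lambda\in J_i$ one has $k_j(\lambda)=0$ for every $j\neq i$, so the exponent in \eqref{eq:II} reduces to $-K_{BG}(\lambda)x-k_i(\lambda)\int_0^x u_i$, and since $k_i$ vanishes off $J_i$ the integral \eqref{eq:gammai} collapses to
$$
\gamma_i(x,t)=\int_{J_i}k_i(\lambda)I_{\rm in}(\lambda)\exp\left[-K_{BG}(\lambda)x-k_i(\lambda)\int_0^x u_i(y,t)\,dy\right]d\lambda,
$$
which depends on $u_i$ alone. Hence the $i$-th equation of \eqref{eq:fullN} is exactly a single-species equation of the type treated in Section~\ref{sec:SingleSpecies}, with the two-species nonlinearity \eqref{eq:f} replaced by its decoupled one-species analogue.

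It then remains to choose the death rates so that each decoupled species persists. Since $k_iI_{\rm in}>0$ on the positive-measure set $J_i$, the quantity $\int_{400}^{700}k_i(\lambda)I_{\rm in}(\lambda)e^{-K_{BG}(\lambda)x}\,d\lambda$ is continuous and strictly positive on the compact interval $[0,L]$, hence bounded below by some $c_i>0$; by the monotonicity and $g_i(0)=0$ in \eqref{eq:ggg}, $g_i$ evaluated there is at least $g_i(c_i)>0$ uniformly in $x$. Taking $d_i\equiv\epsilon_i$ constant with $0<\epsilon_i<g_i(c_i)$ makes the decoupled growth-minus-death rate strictly positive on $[0,L]$, so the weighted-integral persistence condition \eqref{eq:persist1} (equivalently the physical form \eqref{eq:persist1'}) holds for each $i$. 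By Corollary~\ref{corr:Pholds} and Proposition~\ref{prop:semitrivial}, the decoupled equation for $u_i$ possesses a unique positive steady state $\tilde u_i$ that attracts every positive solution, and $\tilde u_i$ is strictly positive on all of $[0,L]$ by the strong maximum principle; consequently $\liminf_{t\to\infty}\inf_{[0,L]}u_i(\cdot,t)=\inf_{[0,L]}\tilde u_i>0$ for each $i$, which is the claim.

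I expect the only genuine content to be the decoupling observation and the bookkeeping showing that disjoint supports eliminate the interspecific coupling; once that is secured, the theorem is an immediate consequence of the single-species persistence theory already established. The point requiring the most care is confirming that the attractivity in Proposition~\ref{prop:semitrivial} is strong enough (e.g.\ uniform on $[0,L]$) to pass from convergence to $\tilde u_i$ to the pointwise-in-$x$ lower bound demanded by the stated $\liminf\inf_x$ form of persistence.
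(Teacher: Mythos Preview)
Your proposal is correct and follows essentially the same route as the paper: choose $k_i$ with pairwise disjoint supports inside an interval on which $I_{\rm in}>0$, observe that the system decouples into $N$ independent single-species problems, and then pick each $d_i$ small enough that the single-species persistence criterion holds. The only cosmetic difference is that the paper verifies persistence by showing the principal eigenvalue is negative via the limit $d_i\to 0$ and Lemma~\ref{lem:eigen}, whereas you verify the explicit integral condition~\eqref{eq:persist1} directly; your concern about uniform-in-$x$ convergence is addressed in the paper by citing the $C([0,L])$ convergence from \cite[Proposition 3.11]{Jiang2019}, exactly as you anticipated.
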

\begin{proof}
By the hypotheses of the theorem, there exists $\lambda_1,\lambda_2$ such that  $400 \leq \lambda_1 < \lambda_2 \leq 700$ and that $I_*:= \inf_{[\lambda_1,\lambda_2]}I_{in}(\lambda) >0$. Let $\{J_i\}_{i=1}^N$ be a partition of $[\lambda_1,\lambda_2]$, and choose the functions $k_i(\lambda)$ such that $\textup{Supp}\,k_i \subset {\rm Int}\, J_i$. In particular, the support of $k_i$ do not overlap. Hence, 
$$
  I(\lambda,x) = I_{\rm in}(\lambda) \exp\left[- K_{BG}(\lambda)x - k_i(\lambda)\int_0^x u_i(y,t)\,dy\right] \quad \text{ in }\textup{Supp}\,k_i, 
$$
and the $i$-th species satisfies effectively a single species equation
$$
\begin{cases}
\partial_t u_i = D_i \partial^2_xu_i - \alpha_i \partial_x u_i+ [g_i(\gamma_i(x,t)) - d_i(x)]u_i &  \text{ for } 0 < x < L,\, t>0,\\
D_i \partial_x u_i - \alpha_i \partial_x u_i = 0 &\text{ for }x = 0,L,\, t>0,
\end{cases}
$$
with $\gamma_i$ being independent of $u_j$ for $j\neq i$. Precisely,
\begin{equation}\label{eq:gammaii}
\gamma_i(x,t) = \int_{400}^{700} k_i(\lambda) I_{\rm in}(\lambda) \exp\left[- K_{BG}(\lambda)x - k_i(\lambda)\int_0^x u_i(y,t)\,dy\right]\,d\lambda.
\end{equation}

Next, we choose $d_i$ to be a positive constant such that
$$
\mu(D_i,\alpha_i,g_i(\int k_i(\lambda)I_{in}(\lambda) \exp(-K_{BG}(\lambda)x)) - d_i) <0.
$$
This is possible since
\begin{align*}
    \lim_{d_i \to 0}\mu(D_i,\alpha_i,&g_i(\int k_i(\lambda)I_{in}(\lambda) \exp(-K_{BG}(\lambda)x)) - d_i)
    \\&=\mu(D_i,\alpha_i,g_i(\int k_i(\lambda)I_{in}(\lambda) \exp(-K_{BG}(\lambda)x)))<0,
\end{align*}

where the last inequality follow from Lemma \ref{lem:eigen}. 
It then follows from~\cite[Proposition 3.11]{Jiang2019} that 
the problem 
$$
\begin{cases}
D_i \partial^2_x u_i - \alpha_i \partial_x u_i + [g_i(\tilde\gamma_i(x)) - d_i]u_i & \text{ for } 0 < x < L,\\
D_i\partial_x u_i(x) - \alpha_i u_i(x) = 0 & \text{ for }x = 0,~L,
\end{cases}
$$
with $\tilde\gamma_i(x)$ given by 
$$
\hat\gamma_i(x)=\int_{400}^{700}a_i(\lambda) k_i(\lambda) I_{\rm in}(\lambda) \exp\left[- K_{BG}(\lambda)x - k_i(\lambda)\int_0^x \tilde{u}_i(y,t)\,dy\right]\,d\lambda,
$$
has a unique positive solution $\tilde{u}_i$. Moreover, 
$$
u_i(\cdot,t) \to \tilde{u}_i \quad \text{ in }C([0,L]),\text{ as }t \to \infty,
$$
provided $u_i(\cdot,0) \not\equiv 0.$  
This completes the proof.
\end{proof}

Next, we numerically demonstrate the possibility of coexistence of five phytoplankton species under niche differentiation. We assume that all five species $D_i=D_j$, $d_i=d_j$, and $g_i=g_j$ for all $i,j$ and that $\alpha_1=0.01$ with $\alpha_i=i\cdot\alpha_1$ for all $i$. We assume that all absorption spectra are unimodal and are given by the truncated normal distribution. Furthermore, each absorption spectrum $k_i(\lambda)$ is a horizontal translation of one another. We alter the location of peak absorption (or the mean) to allow for niche differentiation similarly to Figures \ref{fig:explainspecvspec} and \ref{fig:explainspecvsgen}. We also assume that the incident light ($I_{in}(\lambda)$) is unimodal with peak absorption located at 575 nm allowing for a competitive advantage.  
We compare the relative abundances of the five species at time $t$ defined by
\begin{equation}
    \bar u_i(t)=\frac{|u_i(x,t)|_{L^1}}{\sum_{j=1}^N |u_j(x,t)|_{L^1}},
\end{equation}
where the $L^1$ norm here is taken with respect to the spatial variable $x$. We further denote the relative abundance at equilibrium as $\bar u_i^*$. 
In addition, we define the $N$ species niche differentiation index as
\begin{equation}\label{eq:overlapmeasure}
    \mathcal{I}_i=\frac{1}{N-1}\sum_{j=1,j\neq i}^N \mathcal{I}_S(k_i,k_j).
\end{equation}
 Consequently the average niche differentiation index is given as 
\begin{equation}\label{eq:Ibar}
 \bar{\mathcal{I}}=\frac{1}{N}\sum_i^N \mathcal{I}_i.
\end{equation}

\begin{figure}
     \centering
         \includegraphics[width=0.8\textwidth]{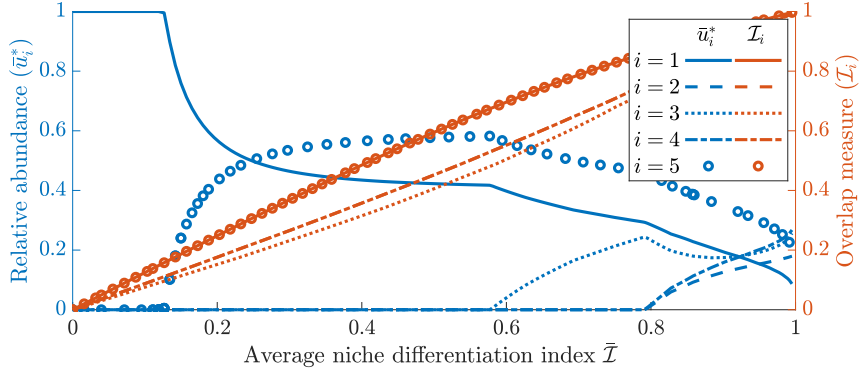}
        \caption{Gives the steady state relative abundance ($\bar u_i^*$) of 5 competing species and their respective overlap measure defined in \eqref{eq:overlapmeasure}. The x-axis is labelled as the average overlap measure $\bar{\mathcal{I}}$ given in \eqref{eq:Ibar}. All other model parameters are the same among species expect the competitive advantage obtained through buoyancy: $\alpha_1=0.01$ mh$^{-1}$, $\alpha_2=0.02$ mh$^{-1}$, $\alpha_3=0.03$ mh$^{-1}$, $\alpha_4=0.04$ mh$^{-1}$, $\alpha_5=0.05$ mh$^{-1}$.} 
        \label{fig:5SP} 
\end{figure}

Figure \ref{fig:5SP} gives the numerical results of the five species competition. Competitive exclusion occurs when niche differentiation is not sufficient and the species with the lowest advection rate (species 1) excludes all other species. However, as the niche differentiation is increased, more species are able to coexist and all five species can persist when niche differentiation is significant enough. 

\section{Red versus Green cyanobacteria competition}

In this section we numerically explore a more realistic competition scenario between two phytoplankton species. To incorporate realistic biological assumptions into our model we consider two main things. First, the background attenuation of water is not uniform across the visible light spectrum and depends on the amount of dissolved and particulate organic matter (gilvin and tripton) in the water.  Second, the absorption spectra considered in Section \ref{sec:NumCoexistmechs} are idealistic for investigation and are not typical for a phytoplankton species. Thus, in this section we consider absorption spectra given empirically as in Figure \ref{fig:absspectra} and explore competition outcomes. 

\subsection{Background attenuation in water}
Here we introduce a reasonable function to more accurately model background attenuation of water, gilvin and tripton and phytoplankton.

We divide the background attenuation into two parts to account for the attenuation of pure water and gilvin and tripton 
\begin{equation}
    K_{BG}(\lambda)=K_W(\lambda) +K_{GT}(\lambda),
\end{equation}
where $K_W(\lambda)$ is readily found in the literature and shown in Figure \ref{fig:waterabsorb}~\cite{Stomp2007,Pope1997}. $K_{GT}(\lambda)$ is also found in literature and is given by the following form \cite{kirk2010}:
\begin{equation}
    K_{GT}(\lambda)=K_{GT}(\lambda_r)\textup{exp}(-S(\lambda-\lambda_r)),
\end{equation}
where $\lambda_r$ is a reference wavelength with a known turbidity and $S$ is the slope of the exponential decline. Following literature we take reasonable values for each of these variables with $S=0.017 $nm$^{-1}$ as in~\cite{Stomp2007} and referenced in~\cite{kirk2010}. We fix our reference wavelength, $\lambda_r$, to be $480$nm.  
The background attenuation is larger in turbid lakes due to the high concentrations of gilvin and tripton. For this reason, we use $K_{GT}(480)$ as a proxy for the turbidity of a lake, and vary $K_{GT}(480)$ between $0.1-3\textup{m}^{-1}$. That is, low $K_{GT}(480)$ values correspond to clear lakes whereas high $K_{GT}(480)$ values correspond to  highly turbid lakes.

\begin{figure}
    \centering
    \includegraphics[width=0.6\paperwidth]{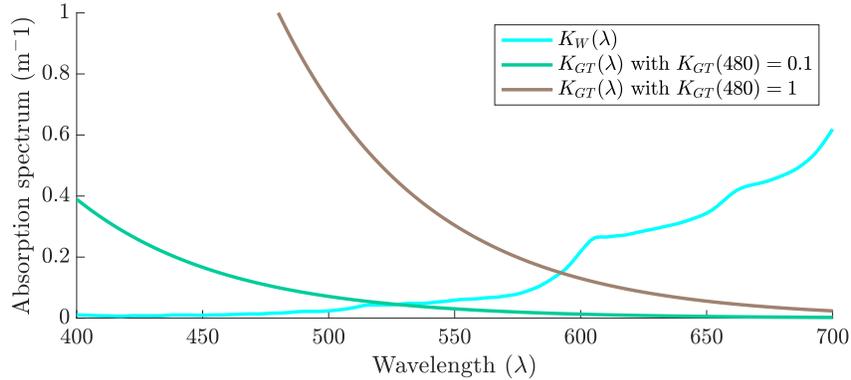}
    \caption{The absorption spectrum of pure water~\cite{Pope1997,Stomp2007}, and the absorption spectra for lakes with gilvin and tripton concentrations representative of clear oligotrophic or mesotrophic waters ($K_{BG}(480)=0.1$), and turbid eutrophic waters ($K_{BG}(480)=1$).}
    \label{fig:waterabsorb}
\end{figure}
Lastly, we consider the absorption spectra of red and green cyanobacteria species. In Figure \ref{fig:absspectra} we see that there are significant differences in the absorption spectra between the phytoplankton allowing for niche differentiation. 

\subsection{Competition outcomes of red and green cyanobacteria}
We now show the steady state outcome when red and green cyanobacteria compete for light in lakes of varying turbidity. 

\begin{figure}
    \centering
    \includegraphics[width=0.7\paperwidth]{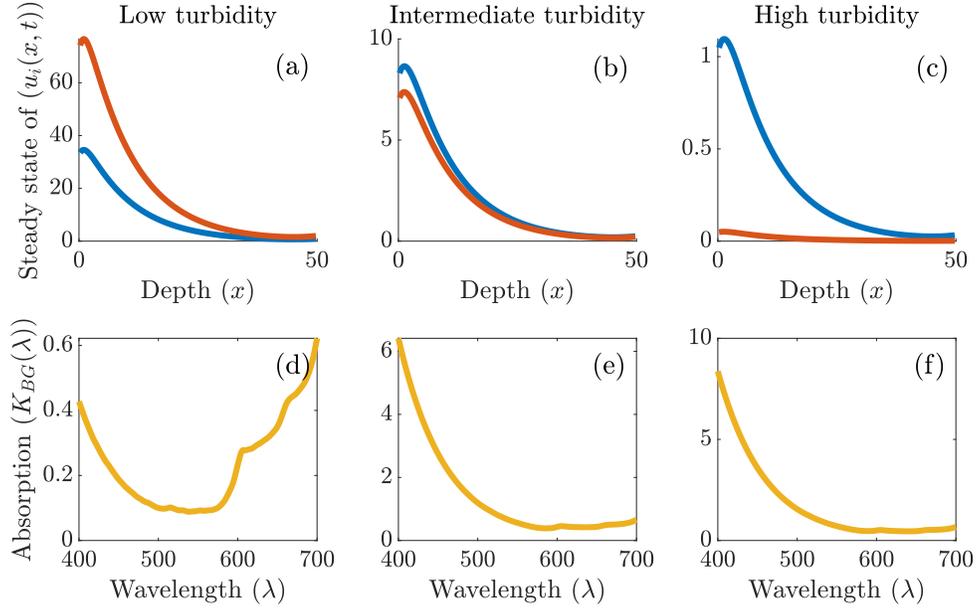}
    \caption{(a)-(c) show steady state outcomes of competition between green cyanobacteria, $u_1(x,t)$ (shown in blue), and red cyanobacteria, $u_2(x,t)$ (shown in red), for various amounts of gilvin and tripton that correspond to low, intermediate and high turbidity, respectively. (d)-(f) shows the background absorption for those states with $K_{BG}(480)=0.1$, $K_{BG}(480)=1.1$, $K_{BG}(480)=2$, respectively.} 
    \label{fig:reallife}
\end{figure}

In Figure \ref{fig:reallife} the competition outcome between green cyanobacteria (\textit{Synechocystis} strain) and red cyanobacteria (\textit{Synechococcus} strain) is shown. In Figure \ref{fig:absspectra}, the green cyanobacteria absorption spectra is shown in blue and the red cyanobacteria is shown in red. Their absorption spectra are sufficiently different so that niche differentiation occurs. That is, the green cyanobacteria mainly absorbs light in the orange-red ranges, whereas the red cyanobacteria absorbs more green light. Both species absorb blue light similarly. Thus, the light availability throughout the water column plays an important role in competition outcome. In Figures \ref{fig:reallife}(d)-(f) we see that as the gilvin and tripton concentrations increase (shifting from low turbidity to high) the background absorption's shift to absorb proportionally more blue and green light, leaving proportionally more orange and red light available. This shift in available light then modifies the competitive outcome, where red cyanobacteria clearly dominate in less turbid case, whereas green cyanobacteria dominate in the highly turbid case, even though the two species coexist in both situations.

\section{Conclusion}

In this manuscript we explore niche differentiation along the light spectrum by extending the models of Stomp et al. \cite{Stomp2007} to the spatial context, using well established reaction-diffusion approach. Differing with previous works~\cite{Jiang2019,HsuLou2010,Du2011}, in which light was regarded as a single resource with varying intensity, here we treat light as a continuum of resources that have varying availability and are consumed in different efficiency by the phytoplankton species.
Our main theoretical results, found in Section \ref{sec:mathresults}, stem from the theory of monotone dynamical systems and include the existence and attractiveness of the equilibrium. These results give a condition for when the semi-trivial equilibria exist and characterize their stability. As an extension, a condition for coexistence is obtained. The condition for coexistence is then made explicit to offer direct biological interpretations based on model parameters.  Niche differentiation is introduced in Section \ref{sec:Mathniche} by allowing the absorption spectra ($k_i(\lambda)$) of competing species to change. We consider the case where the competing species niches are completely disjoint and provide a condition for coexistence. Furthermore we consider the case when competing species occupy the same niche and provide competitive exclusion outcomes based on transport related parameters and show that species who are able to stay closer to the surface through either advection or turbulent diffusion will competitively exclude. These results lay the groundwork to study the impacts niche differentiation will have on coexistence outcomes in Section \ref{sec:NumCoexistmechs}. 

We show numerically, in Section \ref{sec:NumCoexistmechs}, a myriad of mechanisms in which coexistence can occur. When two specialists compete, the competitive advantages given by advection or incident light can be overcome when niche differentiation is significant. This is shown in Figures \ref{fig:specvsSpec}. Furthermore, we see that competitive exclusion occurs when the overlap between the incident light and a species' absorption spectrum is large, see Figure \ref{fig:specvsspecIin}. In addition, the more buoyant species no longer dominates if niche differentiation is significant, as shown in Figure \ref{fig:specvsspecAdvdist}.   Similarly, in the competition between a specialist and a generalist,  coexistence readily occurs for intermediate degrees of niche differentiation. However,if the niche of the specialist occupies only a narrow part of the incident light spectrum, then 
their growth rate can be negatively impacted as shown in Figure \ref{fig:2paramSpecivsGEn}. In either case niche differentiation in the light spectrum is enough to overcome competitive exclusion caused by diffusion and advection, thus offering an important perspective in resolving the paradox of the plankton in the affirmative direction. 

Furthermore, to fully explore the ecological diversity and the paradox of the plankton, we consider a system with $N$ competing species. First, we show analytically that coexistence of $N$ species is possible under sufficient niche differentiation and proper natural death rate ($d_i(\lambda)$) functions. This result  suggests a possible evolutionary strategies that phytoplankton may take in partitioning  in their usage of the light spectrum for growth~\cite{Holtrop2021}. To illustrate our result, we provide numerical simulations for a five species competition scenario with an advection and incident light advantage present. Here we choose phytoplankton species that have differential buoyancy properties. In the absence of niche differentiation, competitive exclusion were predicted by previous work \cite{Jiang2019}. When niche differentiation is significant, we observe that the species are able to coexist in a robust manner.

Lastly, we numerically study the competition dynamics for absorption spectra and background attenuation functions that are representative of phytoplankton species found in nature. Precisely, we consider the absorption spectra of green and red cyanobacteria species and explore the competitive outcome as it depends on the nutrient status, or turbidity of the ecosystem as shown in Figure \ref{fig:reallife}. Our numerical results suggest that clear lakes host higher abundances of red cyanobacteria whereas green cyanobacteria out-compete in highly turbid, eutrophic lakes. Our result in particular confirms with empirical results~\cite{Stomp2007} and is potentially useful in understanding phytoplankton competition.

In this paper, we explored a potential explanation to the paradox of the plankton by allowing for niche differentiation in the visible light spectrum. To achieve this, we made several simplifying assumptions about the biological system, such as our sufficient nutrient assumption. It is well known that phytoplankton dynamics heavily depend on nutrient dynamics~\cite{Whitton2012,Klausmeier2004b,Reynolds2006}. Thus, in order to fully understand phytoplankton population dynamics, future attempts at modelling niche differentiation should also allow for the explicit consideration of nutrient and nutrient uptake dynamics. We have also assumed that our model parameters are constant in time. This in general is not true for ecological systems, and in particular those that explicitly consider light. Light availability is periodic on the time scales of days and, in addition, periodic seasonally. In addition to light, parameters related to mortality and motility can depend on water temperature and thus change seasonally. This type of oscillatory forcing can significantly change dynamics and especially when considering transient dynamics~\cite{Hastings2018}. 

Even though our model can be improved in various ways, our results are biologically intuitive and are consistent with the current state of the biological literature. Our work furthers the understanding of niche differentiation and phytoplankton competition and can be used as a basis for future studies of phytoplankton dynamics and predictive modelling. In conclusion, our study shows that niche differentiation can promote coexistence of phytoplankton species in a robust way, thus supporting one explanation of the Hutchinson's paradox.


\bibliographystyle{ieeetr}
\bibliography{chrisbib}

\begin{thebibliography}{10}

\bibitem{Huisman2018}
J.~Huisman, G.~A. Codd, H.~W. Paerl, B.~W. Ibelings, J.~M. Verspagen, and P.~M.
  Visser, ``{Cyanobacterial blooms},'' 6 2018.

\bibitem{Reynolds2006}
C.~S. Reynolds, {\em {The ecology of phytoplankton}}.
\newblock Cambridge University Press, 1 2006.

\bibitem{Watson2015}
S.~B. Watson, B.~A. Whitton, S.~N. Higgins, H.~W. Paerl, B.~W. Brooks, and
  J.~D. Wehr, ``{Harmful Algal Blooms},'' in {\em Freshwater Algae of North
  America: Ecology and Classification}, no.~March 2017, pp.~873--920, Elsevier
  Inc., 2015.

\bibitem{Paerl2013}
H.~W. Paerl and T.~G. Otten, ``{Harmful Cyanobacterial Blooms: Causes,
  Consequences, and Controls},'' {\em Microbial Ecology}, vol.~65, no.~4,
  pp.~995--1010, 2013.

\bibitem{Hutchinson1961}
G.~E. Hutchinson, ``{The Paradox of the Plankton},'' {\em The American
  Naturalist}, vol.~95, pp.~137--145, 10 1961.

\bibitem{Jiang2019}
D.~Jiang, K.~Y. Lam, Y.~Lou, and Z.~C. Wang, ``{Monotonicity and global
  dynamics of a nonlocal two-species phytoplankton model},'' {\em SIAM Journal
  on Applied Mathematics}, vol.~79, pp.~716--742, 4 2019.

\bibitem{Jiang2021}
D.~Jiang, K.~Y. Lam, and Y.~Lou, ``{Competitive exclusion in a nonlocal
  reaction–diffusion–advection model of phytoplankton populations},'' {\em
  Nonlinear Analysis: Real World Applications}, vol.~61, p.~103350, 10 2021.

\bibitem{HsuLou2010}
S.~B. Hsu and Y.~Lou, ``{Single phytoplankton species growth with light and
  advection in a water column},'' {\em SIAM Journal on Applied Mathematics},
  vol.~70, no.~8, pp.~2942--2974, 2010.

\bibitem{Heggerud2020}
C.~M. Heggerud, H.~Wang, and M.~A. Lewis, ``{Transient dynamics of a
  stoichiometric cyanobacteria model via multiple-scale analysis},'' {\em SIAM
  Journal on Applied Mathematics}, vol.~80, no.~3, pp.~1223--1246, 2020.

\bibitem{Huisman1994}
J.~Huisman and F.~J. Weissing, ``{Light limited growth and competition for
  light in well mixed aquatic environments: an elementary model},'' {\em
  Ecology}, vol.~75, no.~2, pp.~507--520, 1994.

\bibitem{Wang2007}
H.~Wang, H.~Smith, Y.~Kuang, and J.~J. Elser, ``{Dynamics of stoichiometric
  bacteria-algae interactions in the epilimnion},'' {\em SIAM Journal on
  Applied Mathematics}, vol.~68, no.~2, pp.~503--522, 2007.

\bibitem{Burson2018}
A.~Burson, M.~Stomp, E.~Greenwell, J.~Grosse, and J.~Huisman, ``{Competition
  for nutrients and light: testing advances in resource competition with a
  natural phytoplankton community},'' {\em Ecology}, vol.~99, pp.~1108--1118, 5
  2018.

\bibitem{Luimstra2020}
V.~M. Luimstra, J.~M. Verspagen, T.~Xu, J.~M. Schuurmans, and J.~Huisman,
  ``{Changes in water color shift competition between phytoplankton species
  with contrasting light-harvesting strategies},'' {\em Ecology}, vol.~101,
  p.~e02951, 3 2020.

\bibitem{Holtrop2021}
T.~Holtrop, J.~Huisman, M.~Stomp, L.~Biersteker, J.~Aerts, T.~Gr{\'{e}}bert,
  F.~Partensky, L.~Garczarek, and H.~J. v.~d. Woerd, ``{Vibrational modes of
  water predict spectral niches for photosynthesis in lakes and oceans},'' {\em
  Nature Ecology and Evolution}, vol.~5, pp.~55--66, 1 2021.

\bibitem{Stomp2007a}
M.~Stomp, J.~Huisman, L.~J. Stal, and H.~C. Matthijs, ``{Colorful niches of
  phototrophic microorganisms shaped by vibrations of the water molecule},'' 8
  2007.

\bibitem{Du2011}
Y.~Du and L.~Mei, ``{On a nonlocal reaction-diffusion-advection equation
  modelling phytoplankton dynamics},'' {\em Nonlinearity}, vol.~24, no.~1,
  pp.~319--349, 2011.

\bibitem{Shigesada1981}
N.~Shigesada and A.~Okubo, ``{Analysis of the self-shading effect on algal
  vertical distribution in natural waters},'' {\em Journal of Mathematical
  Biology}, vol.~12, no.~3, pp.~311--326, 1981.

\bibitem{Zhang2021}
J.~Zhang, J.~D. Kong, J.~Shi, and H.~Wang, ``{Phytoplankton Competition for
  Nutrients and Light in a Stratified Lake: A Mathematical Model Connecting
  Epilimnion and Hypolimnion},'' {\em Journal of Nonlinear Science}, vol.~31,
  pp.~1--42, 3 2021.

\bibitem{Smith}
H.~L. Smith, {\em {Monotone dynamical systems: an introduction to the theory of
  competitive and cooperative systems: an introduction to the theory of
  competitive and cooperative systems}}.
\newblock American Mathematical Soc., no. 41~ed., 2008.

\bibitem{Stomp2007}
M.~Stomp, J.~Huisman, L.~V{\"{o}}r{\"{o}}s, F.~R. Pick, M.~Laamanen,
  T.~Haverkamp, and L.~J. Stal, ``{Colourful coexistence of red and green
  picocyanobacteria in lakes and seas},'' {\em Ecology Letters}, vol.~10,
  pp.~290--298, 4 2007.

\bibitem{Du2010}
Y.~Du and S.~B. Hsu, ``{On a nonlocal reaction-diffusion problem arising from
  the modeling of phytoplankton growth},'' {\em SIAM Journal on Mathematical
  Analysis}, vol.~42, no.~3, pp.~1305--1333, 2010.

\bibitem{Hess}
P.~Hess, {\em {Periodic-parabolic boundary value problems and positivity}}.
\newblock Harlow: Longman Scientific {\&} Technical, 1991.

\bibitem{Hsu1996competitive}
S.~B. Hsu, H.~L. Smith, and P.~Waltman, ``{Competitive exclusion and
  coexistence for competitive systems on ordered Banach spaces},'' {\em
  Transactions of the American Mathematical Society}, vol.~348, no.~10,
  pp.~4083--4094, 1996.

\bibitem{Lam2016remark}
K.-Y. Lam and D.~Munther, ``{A remark on the global dynamics of competitive
  systems on ordered Banach spaces},'' {\em Proceedings of the American
  Mathematical Society}, vol.~144, pp.~1153--1159, 3 2016.

\bibitem{Pope1997}
R.~M. Pope and E.~S. Fry, ``{Absorption spectrum (380–700 nm) of pure water
  II Integrating cavity measurements},'' {\em Applied Optics}, vol.~36,
  p.~8710, 11 1997.

\bibitem{kirk2010}
J.~T. Kirk, {\em {Light and photosynthesis in aquatic ecosystems}}.
\newblock Cambridge: Cambridge University Press, third~ed., 2010.

\bibitem{Whitton2012}
B.~A. Whitton, {\em {Ecology of cyanobacteria II: Their diversity in space and
  time}}.
\newblock Springer Netherlands, 2012.

\bibitem{Klausmeier2004b}
C.~A. Klausmeier, E.~Litchman, and S.~A. Levin, ``{Phytoplankton growth and
  stoichiometry under multiple nutrient limitation},'' {\em Limnology and
  Oceanography}, vol.~49, pp.~1463--1470, 7 2004.

\bibitem{Hastings2018}
A.~Hastings, K.~C. Abbott, K.~Cuddington, T.~Francis, G.~Gellner, Y.~C. Lai,
  A.~Morozov, S.~Petrovskii, K.~Scranton, and M.~L. Zeeman, ``{Transient
  phenomena in ecology},'' {\em Science}, vol.~361, no.~6406, 2018.

\end{thebibliography}

\appendix

\setcounter{secnumdepth}{0}
\section{Appendix}
\renewcommand\thesection{\Alph{section}}
\addtocounter{section}{1}
In this appendix, we recall several useful lemmas concerning the principal eigenvalue $\mu(D,\alpha,h)$ of \eqref{eq:simpleeigenvalueproblem}.

\begin{lemma}\label{lem:eigen}

 Suppose either (i) $ \int_0^L e^{\alpha x/D} h(x) dx> 0$, or (ii) $ \int_0^L e^{\alpha x/D} h(x) dx=0$, and $h'(x)$ is not identically zero in $[0,L]$, then $\mu(D,\alpha,h)<0$.

\end{lemma}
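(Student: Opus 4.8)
The plan is to argue at the level of the principal eigenvalue $\lambda_1 := -\mu(D,\alpha,h)$ of the operator $\mathcal{A}\phi := D\phi_{xx} - \alpha\phi_x + h\phi$ subject to the no-flux condition $D\phi_x - \alpha\phi = 0$ at $x = 0,L$, so that the claim $\mu<0$ becomes the equivalent statement $\lambda_1 > 0$. Let $\phi>0$ be the associated principal eigenfunction, which exists, is simple and is strictly positive on $[0,L]$ by the Krein--Rutman theorem, so $\mathcal{A}\phi = \lambda_1\phi$. The device I would use is the substitution $\rho := e^{-\alpha x/D}\phi$, which is again strictly positive and bounded below by a positive constant on the compact interval $[0,L]$. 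A direct computation shows that $\rho$ solves the companion problem
\[
D\rho_{xx} + \alpha\rho_x + (h-\lambda_1)\rho = 0 \quad \text{in }(0,L), \qquad \rho_x(0) = \rho_x(L) = 0,
\]
in which the advection sign is reversed and, crucially, the no-flux condition has become a homogeneous Neumann condition. This last feature is exactly what will force the boundary terms to vanish in the integration by parts below.

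The heart of the argument is to compare the $e^{\alpha x/D}$-weighted average of $h$ with $\lambda_1$. From the companion equation I would write $(h-\lambda_1) = -(D\rho_{xx} + \alpha\rho_x)/\rho$ and use the flux identity $D\rho_{xx} + \alpha\rho_x = e^{-\alpha x/D}\,\partial_x\!\big(D e^{\alpha x/D}\rho_x\big)$ to obtain
\[
\int_0^L (h-\lambda_1)\,e^{\alpha x/D}\,dx = -\int_0^L \frac{\partial_x\!\big(D e^{\alpha x/D}\rho_x\big)}{\rho}\,dx .
\]
Integrating by parts and discarding the boundary term (here the Neumann condition $\rho_x(0)=\rho_x(L)=0$ is used) leaves precisely
\[
\int_0^L (h-\lambda_1)\,e^{\alpha x/D}\,dx = -\int_0^L D\,e^{\alpha x/D}\,\frac{(\rho_x)^2}{\rho^2}\,dx \le 0,
\]
which rearranges into the lower bound $\lambda_1 \ge \big(\int_0^L h\,e^{\alpha x/D}\,dx\big)\big/\big(\int_0^L e^{\alpha x/D}\,dx\big)$. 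Equality holds if and only if $\rho_x\equiv 0$, i.e. $\rho$ is constant, which upon substitution into the companion equation forces $h\equiv\lambda_1$ to be constant.

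With this estimate the two cases are immediate. In case (i), $\int_0^L h\,e^{\alpha x/D}\,dx > 0$ makes the right-hand side strictly positive, hence $\lambda_1>0$ and $\mu<0$. In case (ii), $\int_0^L h\,e^{\alpha x/D}\,dx = 0$ only delivers $\lambda_1\ge 0$, and this is exactly where the hypothesis $h'\not\equiv 0$ enters: if $\lambda_1$ were $0$ we would be in the equality case, which would force $h$ constant and contradict $h'\not\equiv 0$; therefore $\lambda_1>0$ strictly and again $\mu<0$. The step that demands the most care is this equality analysis in case (ii) — tracing equality in the weighted inequality back through $\rho\equiv\text{const}$ and then through the companion equation to the constancy of $h$ — together with the verification that the boundary terms genuinely vanish, which the Neumann reduction guarantees. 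Everything else is the routine computation producing the companion problem and the integration by parts.
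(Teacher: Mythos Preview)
Your proof is correct and follows essentially the same route as the paper: both substitute $\rho=\tilde\phi=e^{-\alpha x/D}\phi$ to reduce to a Neumann problem, divide by the positive eigenfunction, and integrate by parts to obtain the key identity $-\mu\int_0^L e^{\alpha x/D}\,dx = D\int_0^L e^{\alpha x/D}\,(\rho_x/\rho)^2\,dx + \int_0^L e^{\alpha x/D}h\,dx$, from which both cases follow; the only cosmetic differences are that you work with $\lambda_1=-\mu$ and phrase the conclusion as a direct lower bound rather than a contradiction argument.
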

\begin{proof}
Let $\tilde\phi(x)= e^{-\alpha x/D}\phi(x)$, where $\phi$ is a principal eigenfunction of $\mu(D,\alpha,h)$, and satisfies $\phi>0$ in $[0,L]$. Then \eqref{eq:simpleeigenvalueproblem} can be rewritten as
\begin{equation}\label{eq:expevalproblem}
\begin{cases}
0 = D \partial_{x}\left( e^{\alpha x/D} \partial_x \tilde\phi \right) + e^{\alpha x/D}(h(x) + \mu)\tilde\phi &\text{ for }(x) \in [0,L],\\
\partial_x \tilde \phi = 0&\text{ for }(x) \in \{0,L\}.
\end{cases}    
\end{equation}
Notice that $\tilde\phi >0$ in $[0,L]$, by the strong maximum principle. One can divide the above equation by $\tilde\phi$ and integrate over $[0,L]$ to get
\begin{align*}
0 &= D \int_0^L \frac{1}{\tilde\phi}\partial_{x}\left( e^{\alpha x/D} \partial_x \tilde\phi \right)\,dx + \int_0^L e^{\alpha x/D}(h(x) + \mu)\,dx,\\
&=- D \int_0^L \partial_x \left( \frac{1}{\tilde\phi}\right) \left( e^{\alpha x/D} \partial_x \tilde\phi \right)\,dx + \int_0^L e^{\alpha x/D}(h(x) + \mu)\,dx,\\
&= D \int_0^L e^{\alpha x/D}  \frac{|\partial_x \tilde\phi|^2}{\tilde\phi^2}\,dx +  \int_0^L e^{\alpha x/D}(h(x) + \mu)\,dx.
\end{align*}
Note that we used the Neumann boundary condition of $\tilde\phi$ to perform the integrate by parts in the second equality. Hence,
\begin{equation}\label{eq:eigeneq}
    -\mu \int_0^L e^{\alpha x/D}\,dx =  D  \int_0^L e^{\alpha x/D}  \frac{|\partial_x \tilde\phi|^2}{\tilde\phi^2}\,dx +  \int_0^L e^{\alpha x/D}h(x)\,dx.
\end{equation}
Suppose to the contrary that $\mu \geq 0$, then it follows from \eqref{eq:eigeneq} that $ \int_0^L e^{\alpha x/D}h(x)\,dx \leq 0.$ Hence, case (i) is impossible, and we must have case (ii), which implies
$$
\int_0^L e^{\alpha x/D}  \frac{|\partial_x \tilde\phi|^2}{\tilde\phi^2}\,dx =  \int_0^L e^{\alpha x/D}h(x)\,dx=0.
$$
Hence, $\partial_x\tilde\phi \equiv 0$ which by \eqref{eq:expevalproblem} (in the Appendix) implies either $\tilde{\phi}\equiv 0$ or $h(x)\equiv 0$,  which leads to a contradiction. 

\end{proof}

\begin{lemma}
If $h(x) \in C^1([0,L])$ satisfies $h'(x) <0$ in $[0,L]$, then 
\begin{itemize}
    \item[{\rm(a)}] $\displaystyle \frac{\partial \mu}{\partial \alpha}(D,\alpha,h) >0$ for any $D>0$ and $\alpha \in \mathbb{R}.$
    \item[{\rm(b)}] $\displaystyle \frac{\partial D}{\partial \alpha}(D,\alpha,h) >0$ for any $D>0$ and $\alpha \leq 0.$
    \item[{\rm(c)}] If $\mu(D_0,\alpha_0,h)=0$ for some $D_0$ and $\alpha_0 \geq h(0)L$, then $\frac{\partial \mu}{\partial D}(D_0,\alpha_0,h) <0.$
\end{itemize}
\label{lem:4.3}
\end{lemma}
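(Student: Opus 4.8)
The plan is to reduce all three assertions to a single sign computation for the derivative of a simple eigenvalue, exploiting the self-adjoint structure that \eqref{eq:simpleeigenvalueproblem} acquires under the substitution $\tilde\phi = e^{-\alpha x/D}\phi$ already used in Lemma \ref{lem:eigen}. First I would record three facts about the principal eigenpair $(\mu,\phi)$ with $\phi>0$: (i) in the variable $\tilde\phi$ the problem becomes the Neumann Sturm--Liouville problem $D(e^{\alpha x/D}\tilde\phi')' + e^{\alpha x/D}(h+\mu)\tilde\phi = 0$ with $\tilde\phi'(0)=\tilde\phi'(L)=0$, so $\mu$ is simple and the adjoint eigenfunction is exactly $\psi = e^{-\alpha x/D}\phi = \tilde\phi$; (ii) the flux $F := D\phi'-\alpha\phi$ satisfies $F'=-(h+\mu)\phi$ with $F(0)=F(L)=0$; (iii) integrating the equation gives $\int_0^L (h+\mu)\phi\,dx = 0$. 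The workhorse throughout is the standard recipe: differentiate the eigenvalue equation and boundary condition in the parameter, pair against $\psi$, and integrate by parts twice so that all bulk terms collapse by the adjoint equation, leaving a boundary-plus-integral expression for $\partial\mu$.

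For (a) I would run this recipe in $\alpha$. Pairing the $\alpha$-differentiated equation against $\psi$ gives $(\partial_\alpha\mu)\int_0^L \phi\psi\,dx = \tfrac12 Q$, where, after invoking the boundary conditions and (ii), the right-hand side reduces to the single quantity
\begin{equation}
Q := -\frac{2}{D}\int_0^L e^{-\alpha x/D}\,\phi(x)\,F(x)\,dx = \frac{2}{D}\int_0^L e^{-\alpha x/D}\,\phi(x)\,G(x)\,dx, \qquad G(x):=\int_0^x (h+\mu)\phi\,dy.
\end{equation}
Since $\int_0^L\phi\psi>0$, it remains to sign $Q$, and this is where $h'<0$ enters: because $h+\mu$ is strictly decreasing while, by (iii), $G(0)=G(L)=0$, the integrand $(h+\mu)\phi$ is positive then negative, so $G$ rises then falls and stays $\ge 0$ on $[0,L]$ (a single hump). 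Hence $Q>0$ and $\partial_\alpha\mu>0$.

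For the statements involving $D$, the same recipe in $D$ yields $(\partial_D\mu)\int_0^L\phi\psi\,dx = \int_0^L e^{\alpha x/D}(\tilde\phi')^2\,dx - \frac{\alpha Q}{2D}$, with the \emph{same} $Q>0$ established in (a). For the regime $\alpha\le 0$ (statement (b), which I read as $\partial_D\mu>0$, matching its companion (c) and the use in Theorem \ref{thm:4.5}) both terms on the right are nonnegative, and not both zero since $h'<0$ forces $\tilde\phi'\not\equiv0$; hence $\partial_D\mu>0$. For (c) I would set $\mu=0$, so that $F=-G\le 0$, and rewrite the numerator as $\frac1D\int_0^L e^{-\alpha x/D}\phi' F\,dx$, so that $\partial_D\mu<0$ becomes equivalent to $\int_0^L e^{-\alpha x/D}\phi' F\,dx<0$. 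As $F\le 0$, it suffices to show $\phi'\ge0$, i.e.\ that the principal eigenfunction is increasing; I would prove this by a first-critical-point argument. At a putative first interior zero $x_1$ of $\phi'$ one has $\alpha\phi(x_1)=G(x_1)=\int_0^{x_1}h\phi\le h(0)\int_0^{x_1}\phi\le h(0)L\,\phi(x_1)\le \alpha\phi(x_1)$, using monotonicity of $\phi$ on $[0,x_1]$ and the threshold $\alpha\ge h(0)L$; equality throughout forces $h\equiv h(0)$ on $[0,x_1]$, contradicting $h'<0$. Thus $\phi'>0$, $F<0$ on $(0,L)$, and the integral is strictly negative.

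The routine part is the two integration-by-parts identities that produce the derivative formulas. The crux, and the place I expect the real difficulty, is the sign extraction: the single-hump argument for $G$ underpinning $Q>0$ in (a) and (b), and especially the monotonicity of $\phi$ in (c), where the precise hypothesis $\alpha\ge h(0)L$ is exactly what closes the chain of inequalities at the first critical point. Once $Q>0$ and the monotonicity of $\phi$ are in hand, all three conclusions follow from the two boxed derivative formulas.
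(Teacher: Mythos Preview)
Your proposal is correct. The paper's own ``proof'' consists solely of citations to \cite[Lemmas 4.8 and 4.9]{Jiang2019} and contains no argument, so there is little to compare at the level of technique; you have in effect reconstructed a self-contained proof where the paper defers entirely to the reference. The route you take---differentiate the eigenvalue relation in the parameter, pair against the adjoint eigenfunction $\psi=e^{-\alpha x/D}\phi$, collapse the bulk terms, and then extract signs via the single-hump structure of $G(x)=\int_0^x(h+\mu)\phi$---is the standard one and is almost certainly what \cite{Jiang2019} does as well.

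A few remarks on the details. Your derivative identity $\mu_D\int\phi\psi=\frac{1}{D}\int e^{-\alpha x/D}\phi'F$ and its equivalent form $\int e^{\alpha x/D}(\tilde\phi')^2-\frac{\alpha Q}{2D}$ both check out algebraically, and the passage between them via $F=D\phi'-\alpha\phi$ is clean. Your reading of (b) as $\partial_D\mu>0$ (rather than the evident typo $\partial D/\partial\alpha$) is the right one, and is consistent with its use in Theorem~\ref{thm:4.5}. In (c), the chain $\alpha\phi(x_1)=\int_0^{x_1}h\phi<h(0)\int_0^{x_1}\phi\le h(0)L\phi(x_1)\le\alpha\phi(x_1)$ is valid; note that the middle inequality implicitly uses $h(0)>0$, which follows from your single-hump analysis (since $\mu=0$ forces $h(0)+\mu>0$). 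You may wish to say this explicitly, but it is already implicit in the argument you wrote.
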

\begin{proof}
Assertion (a) follows from~\cite[Lemma 4.8]{Jiang2019}, while assertions (b) and (c) follow from~\cite[Lemma 4.9]{Jiang2019}.
\end{proof}

\end{document}